\documentclass{article}

\usepackage{arxiv}
\usepackage[utf8]{inputenc} % allow utf-8 input
\usepackage[T1]{fontenc}    % use 8-bit T1 fonts
\usepackage{hyperref}       % hyperlinks
\usepackage{url}            % simple URL typesetting
\usepackage{booktabs}       % professional-quality tables
\usepackage{amsfonts}       % blackboard math symbols
\usepackage{nicefrac}       % compact symbols for 1/2, etc.
\usepackage{microtype}      % microtypography
\usepackage{lipsum}
\usepackage{graphicx}
\graphicspath{ {./images/} }
\usepackage{ulem}
\usepackage{diagbox}
\usepackage{comment}
\usepackage{nicematrix}
\usepackage{scrextend}
\usepackage{float}
 \usepackage{lscape}

\usepackage{amsmath,amssymb,latexsym}
\usepackage{amsthm}
\usepackage{stmaryrd}
\newtheorem{theorem}{Theorem}[section]

\newtheorem{lemma}[theorem]{Lemma}
\newtheorem{proposition}[theorem]{Proposition}

\theoremstyle{definition}
\newtheorem{definition}[theorem]{Definition}
\newtheorem{remark}[theorem]{Remark}

\newtheorem{example}[theorem]{Example}

\newtheorem*{notation}{Notation}

\usepackage[dvipsnames]{xcolor}
\usepackage{tikz}
\usepackage{enumitem}
\usepackage{lineno}
\newcommand{\FF}{\mathbb{F}}
\newcommand{\LL}{\mathbb{L}}
\newcommand{\KK}{\mathbb{K}}
\newcommand{\cv}{{\bf c}}
\newcommand{\Cc}{\mathcal{C}}
\newcommand{\Dc}{\mathcal{D}}
\newcommand{\Ul}{U_{\mathbb{L}}}
\newcommand{\defegal}{\overset{{\rm def}}{=}}
\newcommand{\Ww}{\text{w}}

\title{MacWilliams identities for the generalized rank weights}

\author{
 Julien Molina \\
  Université Grenoble Alpes\\
  Institut Fourier\\
  CS 40700, 38058 Grenoble cedex 9 \\
  \texttt{julien.molina@univ-grenoble-alpes.fr} \\
}

\begin{document}
\maketitle
\begin{abstract}
We study the generalized rank weight distribution of a linear code. First, we provide a MacWilliams-type identity which relates the distributions of a code and its dual. Then, we give a formula for the enumerator polynomial. Finally, we explicitly compute the distribution of an MRD code.
\end{abstract}

% keywords can be removed
\keywords{Rank metric code, Generalized rank weight, MacWilliams identity, Generalized rank weight distribution, MRD codes}

\tableofcontents

\section{Introduction}
Linear codes over finite fields can be endowed with different metrics to study some of their properties. Oftentimes, metrics are constructed in the following way. Consider a weight wt on a code $\Cc$, which is nothing but a map wt $ : \Cc \longrightarrow \mathbb{N}$, from the code into the set of nonnegative integers. This weight induces a metric $d$ defined on the code by $d : \Cc \times \Cc \longrightarrow \mathbb{N}, (\cv_1,\cv_2) \longmapsto \text{wt}(\cv_1-\cv_2)$. Then, we can define the \textit{minimum weight} or the \textit{minimum distance} of the code $\Cc$ by $d_{min}(\Cc) \defegal \underset{\cv_1\neq \cv_2 \in \Cc}{\min} \{ d(\cv_1,\cv_2)\}$. Using the linearity of the code, we can rewrite it as $d_{min}(\Cc) = \displaystyle \min_{\substack{\cv\in \Cc \\ \cv\neq 0}} \{ \text{wt}(\cv)\}$. Once a code is equipped with a certain distance, one may want to compute how many codewords of a fixed weight there are in the code. In other words, we want to know the complete weight distribution of a code and also if this distribution might be linked to the weight distribution of the dual code. Those questions are relevant for practical means since the more we know the distribution, the better we get information on the code's error-correcting and error-detecting capacities.

Traditionally, codes were endowed with the Hamming metric. In this case, the weight considered is $\Ww \text{t}_H(\cv) \defegal \# \{i \mid c_i \neq 0\}$, where $\cv=(c_1,\dots,c_n) \in \Cc$. For this metric, the aforementioned questions are completely answered. Indeed, in \cite{MacWiliams}, MacWilliams demonstrated that the Hamming weight distribution of a code is linearly linked to the Hamming weight distribution of its dual ; this relationship was also rewritten in terms of their weight enumerator and is now known as \textit{MacWilliams identities}. Later, in \cite{Klove}, Kl{\o}ve gave a closed-form formula for the weight distribution of a code. Also, Tolhuizen demonstrated in \cite{Tolhuizen} that the Hamming weight distribution of a maximum distance separable (MDS) code does not depend on the code itself but only on its length, its dimension and the size of the alphabet.

Recently, linear codes have increasingly been considered with respect to the rank metric.
Rank metric for linear codes was first introduced and used independently by Gabidulin \cite{Gabidulin} and Roth \cite{ROTH}. This metric comes from the following weight. Let us pick a code $\Cc \subset \FF_{q^m}^n$ of length $n$ over $\FF_{q^m}$. Let us fix an $\FF_q$-basis $\mathcal{B}$ of $\FF_{q^m}$. For any vector $\cv = (c_1,\ldots,c_n)\in \FF_{q^m}^n$, let  $M_{\mathcal{B}}(\cv)\in {\rm M}_{m\times n}(\FF_q)$ be the matrix whose $i$th column is the vector of coordinates of $c_i$ with respect to $\mathcal{B}$. The \textit{rank weight} wt$_R$ of a vector $\cv\in \Cc$ is thus the rank of $M_{\mathcal{B}}(\cv)$. We refer to the recent survey \cite{FnT} and references therein for the state-of-the-art of results on codes with respect to the rank metric and their applications to network coding and cryptography. With respect to this metric, the same questions can arise as to counting the number of codewords with a fixed rank weight and whether MacWilliams identities hold. In this case, lots of questions are answered. In \cite[Theorem 1 and Proposition 2, Section III]{GadouleauYan}, Gadouleau and Yan proved a MacWilliams identity for the rank metric, in other words, there exists a linear relationship between the rank weight distributions of a code and the dual code. This relationship also exists for the weight enumerators using the $q$-transform. Years before, in \cite{Delsarte}, Delsarte demonstrated a version of MacWilliams identities and inequalities in the context of the normed space of all bilinear forms between vector spaces over $\FF_q$ endowed with the rank norm. His results were then applied to coding theory. Also in \cite[Proposition 3, Section III]{GadouleauYan}, Gadouleau and Yan showed that the rank weight distribution of a maximum rank distance (MRD) code is completely determined by its dimension, its length and the degree of the field extension. Moreover, in \cite{Ravagnani}, Ravagnani showed all the same MacWilliams identities and results on MRD codes but in the context of Delsarte rank metric codes, where codes are seen as subspaces of matrices with entries in $\FF_q$. In this same context, Shiromoto \cite{Shiromoto} proved the same results but using the techniques of $q$-(poly)matroids. Furthermore, rank metric codes studied over rings also satisfy MacWilliams identities, as demonstrated in \cite{BlancoBoixAlGaloisRings,
BlancoBoixAlChainRing}.

For some years, the sum-rank metric has been using and it can be seen as a generalization of both Hamming and rank metrics. This metric appears to have interesting applications for instance in the study of space-time codes or in the context of secure multishot network coding, see \cite[Introduction and Section $1$]{GeneralizedRS}. A sum-rank metric code $\Cc$ is an $\FF_q$-linear subspace of the product $\prod_{i=1}^t{\rm M}_{m_i,n_i}(\FF_q)$, endowed with the sum-rank metric. This metric comes from the weight wt$_{SR}$ defined as wt$_{SR}(C) = \sum_{i=1}^trank(C_i)$, where $C=(C_1,\dots ,C_t)\in \Cc \subset \prod_{i=1}^t{\rm M}_{m_i,n_i}(\FF_q)$. For this weight, MacWilliams identities were proven in \cite[Section $5$]{SRC_Byrne} and for Maximum rank-sum distance (MRSD) codes, it was also proven that their sum-rank weight distribution depends only on the size of the code \cite[Section $6$]{SRC_Byrne}.

Later, those three metrics were generalized, in the sense that the weights are defined on \textbf{subspaces} of the code. First definition of generalized Hamming weights (GHW) was given in \cite{WeiGHW} ; some basic properties are also demonstrated and it is explained that generalized weights characterize the code performance in the context of wire-tap channel of type II. As for the non-generalized weights, the same questions on the distributions of these generalized weights can be asked but in a slightly different way : we want to count the number of \textbf{subspaces} with a fixed generalized weight. In the case of the GHW, this question is completely answered in \cite{JurPelGeneralized}. Indeed, Jurrius and Pellikaan gave a closed-form formula for the generalized Hamming weight enumerator in \cite[Section 2, Theorem 1]{JurPelGeneralized} and they also demonstrated MacWilliams identities for GHW, see \cite[Section 7, Theorems 11 and 12]{JurPelGeneralized}. Moreover, they proved that the generalized Hamming weight distribution of an MDS code depends only on the parameters of the code. However, for the generalized sum-rank weight (GSRW), defined in \cite[Section $5$]{GeneralizedRS}, the same questions are still not answered. 

We summarize all of the existing results regarding these metrics and the MacWilliams theory in the three following Tables \ref{table1}, \ref{table2} and $\ref{table3}$.

\begin{table}[H]
\begin{center}
\begin{NiceTabular}{|c||c|c|}
\hline
\Block{2-1}{\diagbox{}{}} & \multicolumn{2}{ c |}{\textsc{Hamming metric}} \\
\cline{2-3}
& Hamming Weight & GHW  \\
 \hline
Weights enumerator & See \cite{Klove} &  See \cite{JurPelGeneralized}  \\
\hline
MacWilliams identities & See \cite{MacWiliams} & See \cite{JurPelGeneralized}  \\
\hline 
MDS codes & See \cite{Tolhuizen} & See \cite{JurPelGeneralized}  \\
\hline
\end{NiceTabular}
\vspace{0.2cm}
\caption{State-of-the-art about results on the Hamming weights distribution}
\label{table1}
\end{center}
\end{table}

\vspace{-0.4cm}

\begin{table}[h!]
\begin{center}
\begin{NiceTabular}{|c||c|c|}
\hline
\Block{2-1}{\diagbox{}{}} & \multicolumn{2}{ c |}{ \textsc{Rank metric}}  \\
\cline{2-3}
& Rank Weight & GRW  \\
 \hline
Weights enumerator & See  \cite{Shiromoto} & \textit{Not Answered} (\textit{N.A.})\\
\hline
MacWilliams identities &  See \cite{GadouleauYan,Ravagnani,Shiromoto} & \textit{N.A.} \\
\hline 
MRD codes &  See \cite{GadouleauYan,Ravagnani} & \textit{N.A.} \\
\hline
\end{NiceTabular}
\vspace{0.2cm}
\caption{State-of-the-art about results on the Rank weights distribution}
\label{table2}
\end{center}
\end{table}

\vspace{-0.4cm}

\begin{table}[h!]
\begin{center}
\begin{NiceTabular}{|c||c|c|}
\hline
\Block{2-1}{\diagbox{}{}} & \multicolumn{2}{ c |}{ \textsc{Sum-Rank metric}}  \\
\cline{2-3}
& Sum-Rank Weight & GSRW \\
 \hline
Weights enumerator & \textit{N.A.} & \textit{N.A.} \\
\hline
MacWilliams identities & See  \cite{GeneralizedRS} & \textit{N.A.} \\
\hline 
MRSD codes & See \cite{GeneralizedRS} & \textit{N.A.} \\
\hline
\end{NiceTabular}
\vspace{0.2cm}
\caption{State-of-the-art about results on the Sum-Rank weights distribution}
\label{table3}
\end{center}
\end{table}

The questions left in the column of the generalized rank weight in Table \ref{table2} will be at the core of this paper and will be completely answered. It is organized as follows. In Section \ref{Section2}, we define all the concepts and notation we use in this paper about the generalized rank weights and all the tools related to the weight distribution. Then, in Section \ref{Section3}, we obtain a MacWilliams identity for  the generalized rank weight distribution. Later, in Section \ref{Section4}, a closed-form formula is obtained for the generalized rank weight enumerator, tailoring the steps followed in \cite{JurPelGeneralized} to our metric. Finally, in Section \ref{Section5}, we solve the question on the generalized rank weight  distribution of MRD codes by showing that it depends only on the parameters of the code and not on the code itself. 

All the results of this paper are supported by examples and the SAGE \cite{SAGE} codes used to perform the computations can be found at : \url{https://github.com/JulienMLN/MacWilliamsIdentitiesForGeneralizedRankWeights}

\section{Definitions}\label{Section2}

Throughout this paper, an $[n,k]$ \textit{linear code} is nothing but a subspace of $\FF_{q^m}^n$ of dimension $k$ over $\FF_{q^m}$, with $m\geq 1$ an integer and $\FF_t$ the finite field with $t$ elements, $t$ a prime power. The number $k$ is the dimension of the code and $n$ is called its length. Such codes are said to be codes \textit{à la} Gabidulin as opposed to Block codes or Delsarte codes where codes are subspaces of rectangular matrices with coefficients in $\FF_q$. This later approach is used by Shiromoto in \cite{Shiromoto} and by Ravagnani in \cite{Ravagnani}. In this same paper, Section $3$, Ravagnani provides a complete comparison between Delsarte codes and Gabidulin codes in terms of their weight distribution, dimensions and duality theory.

\subsection{Generalized rank metric}

First definitions of generalized rank weights (GRW) were given over finite fields extension $\FF_{q^m}/\FF_q$. In \cite{BerhuyFaselGarotta}, it is shown that same definitions can be extended over any finite extension $\LL/\KK$. For the purpose of this paper, we will work over finite fields since our goal is to enumerate codewords and subspaces of a code.

\begin{definition}[see \cite{JurriusPellikaan}]\label{DefJurriusPelik}
    Let $\Cc$ be an $\FF_{q^m}$-linear code with parameters $[n,k]$, that is a code of length $n$ and dimension $k$.

    Let us pick a $\FF_q$-basis  $\mathcal{B}$ of $\FF_{q^m}$. For any vector $\cv = (c_1,\ldots,c_n)\in \FF_{q^m}^n$, let  $M_{\mathcal{B}}(\cv)\in {\rm M}_{m\times n}(\FF_q)$ be the matrix whose entries are the coordinates of each $c_i$ with respect to $\mathcal{B}$. The {\it rank support} of $\cv$, denoted by Rsupp$(\cv)$, is the $\FF_q$-linear row space of $M_{\mathcal{B}}(\cv)$.  We define wt$_R(\cv)$ to be the dimension of Rsupp$(\cv)$, that is, the rank of $M_{\mathcal{B}}(\cv)$. This does not depend on the choice of $\mathcal{B}$. 

    Let $\mathcal{D}$ be an $\FF_{q^m}$-linear subspace of $\Cc$. Then, Rsupp$(\mathcal{D})$, the {\it rank support} of $\mathcal{D}$, is the $\FF_q$-linear subspace of $\FF_q^n$ generated by Rsupp$(\textbf{d})$, for all $\textbf{d}\in \mathcal{D}$. Then, wt$_R(\mathcal{D})$ is defined as the dimension of Rsupp$(\mathcal{D})$.

    Finally, for $1\leq r \leq k$, the {\it $r$-$th$ generalized rank weight} of the code $\Cc$, denoted by $M_r(\Cc)$, is defined as $$M_r(\Cc) \defegal  \underset{\substack{\mathcal{\mathcal{D}} \subset \Cc \\ \dim (\mathcal{\mathcal{D}}) = r}}{\min} \text{wt}_R(\mathcal{\mathcal{D}}).$$ \vspace{0.25cm}
\end{definition}

\begin{remark}\label{remarkcoord}
    For a codeword $\cv = (c_1,...,c_n)\in \Cc$, wt$_R(\cv)$ is nothing but the dimension of the $\FF_q$-linear subspace generated by the coordinates of $\cv$. In other words, wt$_R(\cv) = \dim_{\FF_q} \text{Span}(c_1,...,c_n)$.
\end{remark}

There exist three more definitions of the generalized rank weights. All these definitions are equivalent under the assumption that $m\geq n$, with $m$ the degree of the field extension and $n$ the length of the code. The proofs can be found in \cite{JurriusPellikaan} for codes defined over finite fields and in  \cite{BerhuyFaselGarotta} for codes defined over any finite extension. The assumption $m\geq n$ will be assumed throughout this paper so that we can use any definition independently.

\begin{definition}[See \cite{OggierSboui}]
    Let $\Cc$ be an $\FF_{q^m}$-linear code with parameters $[n,k]$.
    The $r$-th generalized rank weight is given by : 
    $$M_r(\Cc) = \underset{\substack{\mathcal{\mathcal{D}} \subset \Cc \\ \dim (\mathcal{\mathcal{D}}) = r}}{\min} \text{maxwt}_R(\mathcal{D}), $$ where maxwt$_R(\mathcal{D})$ is $\underset{d\in \mathcal{D}}{\max}$ wt$_R(\textbf{d})$.
\end{definition}

\begin{remark}\label{MinMaxRW}
   From this definition, the first and last GRW can be rewritten as follows :  \[ \displaystyle M_1(\Cc) = \min_{\substack{\cv \in \Cc \\ \cv \neq 0}}\text{wt}_R(\cv) \hspace{1cm}\text{and} \hspace{1cm}\displaystyle M_k(\Cc) = \max_{\cv \in \Cc}\text{wt}_R(\cv), \]for any $[n,k]$ linear code $\Cc$.
\end{remark}

\begin{definition}[See \cite{KuriMatsuUye}]\label{DefKuriMatsuUye}
    Let $\Cc$ be an $\FF_{q^m}$-linear code with parameters $[n,k]$.
    The $r$-th generalized rank weight is given by : 
    $$M_r(\Cc) = \underset{\substack{V \subset \FF_{q^m}^n, \ V^*=V \\ \dim (\Cc \cap V) \geq r}}{\min} \dim V, $$
    where $V^*$ is the Galois closure of $V$, which is the smallest subspace of $\FF_{q^m}^n$ that contains $V$ and that is closed under the component-wise action of the Frobenius map of the extension $\FF_{q^m}/\FF_q$.
\end{definition}

There is another definition of generalized rank weights using Galois closure in \cite{JeromeDucoat}. We will not be using this definition ; for more details on Galois closure, one may read \cite{JurriusPellikaan, BerhuyFaselGarotta}. 

\begin{definition}
    For an $[n,k]$-linear code $\Cc$, the collection of weights $M_1(\Cc), M_2(\Cc),$ $ \ldots ,  M_k(\Cc)$ is called the \textit{generalized rank weight hierarchy of $\Cc$}. 
    In particular, for $r=1$, $M_1(\Cc)$ is called the \textit{minimum rank distance/weight}.
\end{definition}

Some well-known properties of the rank weight hierarchy are summarized in the following proposition.

\begin{proposition}\label{PropPoidsgeneralites}
Let $\Cc$ be an $\FF_{q^m}$-linear code with parameters $[n,k]$.
    \begin{enumerate}
        \item The rank weight hierarchy is  increasing (\cite[Lemma $9$]{KuriMatsuUye}) : $$1\leq M_1(\Cc) < M_2(\Cc) < \cdots < M_k(\Cc) \leq n.$$ 
        \item For $1\leq r \leq k$, we have a generalized Singleton bound (\cite[Corollary $15$]{KuriMatsuUye}) : $$M_r(\Cc) \leq n - k +r.$$
        \item Let $\Cc^{\perp} = \{ \cv' \in \FF_{q^m}^n \ ; \ \langle \cv',\cv\rangle =0, \forall \cv\in \Cc \}$ be the dual code of $\Cc$, where $\langle \cdot , \cdot \rangle$ is the standard inner product over $\FF_{q^m}^n$. Then (\cite[Theorem I.3]{JeromeDucoat}) : $$\{ M_r(\Cc) \ ; \ 1\leq r \leq k\} = \{ 1, \ldots , n \}\setminus \{ n+1-M_r(\Cc^{\perp}) \ ; \ 1\leq r \leq n-k\}.$$ 
    \end{enumerate}
\end{proposition}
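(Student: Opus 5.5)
We would prove the three items separately, using whichever of the equivalent definitions of $M_r(\Cc)$ is most convenient; this is legitimate because $m\geq n$ is assumed throughout.

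For item 1, we use Definition \ref{DefKuriMatsuUye}. Let $V$ be a Galois-closed subspace with $\dim(\Cc\cap V)\geq r+1$ and $\dim V = M_{r+1}(\Cc)$. A Galois-closed subspace of $\FF_{q^m}^n$ has the form $W\otimes_{\FF_q}\FF_{q^m}$ for some $W\subseteq \FF_q^n$; this is the standard descent fact from \cite{JurriusPellikaan}. Since $\Cc\cap V\neq 0$, we can choose a coordinate hyperplane of $\FF_q^n$ and intersect $W$ with it so that $\dim_{\FF_q}$ drops by exactly one. Concretely, we pick a coordinate form $x_i$ that does not vanish identically on $W$. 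This gives a Galois-closed $V'\subset V$ of codimension $1$. Intersecting $\Cc\cap V$ with a hyperplane loses at most one dimension, so $\dim(\Cc\cap V')\geq r$, and hence $M_r(\Cc)\leq M_{r+1}(\Cc)-1$. For the endpoints, $M_1\geq 1$ because a nonzero codeword has nonzero rank weight. Also $M_k\leq n$, since $V=\FF_{q^m}^n$ is Galois closed and contains $\Cc$.

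For item 2, we argue by downward induction using item 1. Since $M_k(\Cc)\leq n$ and the sequence increases strictly by integers, we get $M_r\leq M_k-(k-r)\leq n-k+r$.

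Item 3 is the main obstacle. The plan is a Wei-type duality argument. By items 1 and 2, both hierarchies are strictly increasing subsets of $\{1,\dots,n\}$, of sizes $k$ and $n-k$ respectively. It therefore suffices to show that the two sets $\{M_r(\Cc)\}$ and $\{n+1-M_s(\Cc^\perp)\}$ are disjoint.

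Suppose, for contradiction, that $M_r(\Cc)=n+1-M_s(\Cc^\perp)$ for some $r,s$. Set $t=M_r(\Cc)$. The key lemma, which we would prove via the correspondence $V=W\otimes\FF_{q^m}$ and $V^\perp=W^\perp\otimes\FF_{q^m}$, is the following identity for a Galois-closed $V$ of dimension $t$:
$$\dim(\Cc^\perp\cap V^\perp)=n-t-k+\dim(\Cc\cap V),$$
which follows from $\dim(\Cc+V)^\perp=n-\dim(\Cc+V)$. Note that $V^\perp$ is again Galois closed.

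Using this identity, we first show that $\dim(\Cc\cap V)=r$ exactly at the minimizer, with the count increasing by at most one per dimension. We then translate this into the statement that $n-M_s(\Cc^\perp)$ is not of the form $t'$ with a jump. In practice, we would mirror Wei's original proof: show that for $t=M_r(\Cc)$, every Galois-closed $U$ of dimension $n+1-t$ satisfies $\dim(\Cc^\perp\cap U)=\dim(\Cc^\perp\cap U')$ for a suitable Galois-closed $U'$ of dimension $n-t$. This forbids $n+1-t$ from being a jump point of the dual hierarchy.

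The delicate part is handling the Galois-closure constraint in this last step. We would do it by working entirely on the $\FF_q$-subspaces $W\subseteq\FF_q^n$, where orthogonality and dimension counting behave exactly as in the Hamming-support case of Wei.
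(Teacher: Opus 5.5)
\textbf{There is a genuine gap, in the central step of item 3.} The paper gives no proof of this proposition; it only cites Kurihara--Matsumoto--Uyematsu and Ducoat. Your items 1 and 2 are correct and self-contained. In item 3, your reduction to disjointness is correct, and so is the identity $\dim(\Cc^\perp\cap V^\perp)=n-\dim V-k+\dim(\Cc\cap V)$.

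What is missing is the actual proof that $n+1-M_r(\Cc)$ is never a dual weight. The claim you propose to establish is a per-subspace statement, and you give no argument for it. The natural way to produce $U'$ is as a Galois-closed hyperplane of $U$, which is what ``mirroring Wei'' suggests, and that fails. Take $n=3$, $m\geq 3$, $\alpha\in\FF_{q^m}\setminus\FF_q$, and $\Cc$ spanned by $(1,\alpha,0)$, so that $t=M_1(\Cc)=2$. Let $U$ be the $\FF_{q^m}$-span of $e_1,e_2$; it is Galois closed of dimension $n+1-t=2$. Then $\Cc^\perp\cap U$ is the line spanned by $(\alpha,-1,0)$. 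But every Galois-closed line in $U$ is spanned by some nonzero $w\in\FF_q^3$ with $w_3=0$, and $w_1+\alpha w_2=0$ forces $w=0$. So $\dim(\Cc^\perp\cap U')=0$ for every such $U'$. If $U'$ may lie anywhere, the claim happens to hold (here take $U'$ spanned by $e_3$). Proving it in general, however, needs an extra intermediate-value argument over the Grassmannian that you do not supply, and it is stronger than what you need.

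The Galois-closure constraint is also not the delicate part. Since $V=W\otimes\FF_{q^m}$ and $V^\perp=W^\perp\otimes\FF_{q^m}$, the map $V\mapsto V^\perp$ is a bijection from Galois-closed subspaces of dimension $d$ onto those of dimension $n-d$.

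Your jump-point intuition is right. What is missing is to pass from individual subspaces to maxima, where that bijection closes the argument.
\begin{itemize}
\item Let $g(d)$ be the maximum of $\dim(\Cc\cap V)$ over Galois-closed $V$ of dimension $d$, and let $f(d)$ be the same maximum for $\Cc^\perp$.
\item Your identity together with the bijection gives $f(n-d)=n-d-k+g(d)$.
\item Your hyperplane argument from item 1, plus the fact that a Galois-closed subspace can be enlarged by one dimension, shows that $g$ and $f$ increase by $0$ or $1$ at each step. Hence $\{M_r(\Cc)\}$ is exactly the set of $d$ with $g(d)=g(d-1)+1$, and likewise for $\Cc^\perp$ with $f$.
\item Then $f(n+1-d)-f(n-d)=1-\bigl(g(d)-g(d-1)\bigr)$. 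So $d$ is a jump of $g$ if and only if $n+1-d$ is not a jump of $f$, which is item 3 directly, with no cardinality count needed.
\end{itemize}

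Equivalently, you can use Wei's two-sided form. Set $t=M_r(\Cc)$ and $s_0=n-k-t+r\geq 0$.
\begin{itemize}
\item Applying the identity to a minimizer $V$ for $M_r(\Cc)$ gives $M_{s_0}(\Cc^\perp)\leq n-t$ via $V^\perp$, when $s_0\geq 1$.
\item If $M_{s_0+1}(\Cc^\perp)\leq n+1-t$, enlarging a minimizer gives a Galois-closed $U$ of dimension $n+1-t$ with $\dim(\Cc^\perp\cap U)\geq s_0+1$. The identity applied to $\Cc^\perp$ then yields $\dim(\Cc\cap U^\perp)\geq r$ with $\dim U^\perp=t-1$, contradicting the minimality of $t$.
\item Hence $M_{s_0+1}(\Cc^\perp)\geq n+2-t$, and strict monotonicity excludes $n+1-t$ from the dual weights.
\end{itemize}
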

\vspace{0.25cm}

\begin{definition}\label{MRDdef}
    Let $\Cc$ be an $\FF_{q^m}$-linear code with parameters $[n,k]$. The code $\Cc$ is said to be $r$-MRD (Maximum Rank Distance) if $M_r(\Cc)=n-k+r$, that is, when the $r$-th rank distance reaches the generalized Singleton bound.
    In particular, for $r=1$, we say that $\Cc$ is an MRD code.
\end{definition}

\subsection{Generalized rank weight distribution and enumerator}

For the rank metric, the weight distribution of a code $\Cc$ specifies the number of codewords of each possible weight. Since, we work with the generalized rank metric, we have a slightly different definition since generalized weights are taken over subspaces of $\Cc$.

\begin{definition}
     The generalized rank weight distribution of a code $\Cc$ consists in the collection $(A^r_w(\Cc))_{r,w}$ where $$A_w^r(\Cc) \defegal \#\{ \Dc \subset \Cc \ \mid \ \dim_{\FF_{q^m}}\Dc = r \ ; \ \text{wt}_R(\Dc) = w\},$$
     for $r$ and $w$ nonnegative integers, and where $\#E$ is the cardinality of a set $E$.
\end{definition}

Now, we can state some easy properties of the numbers $A^r_w(\Cc)$. 

\begin{proposition}\label{PropDistribution}
    Let $\Cc\subset \FF_{q^m}^n$ be an $[n,k]$ code and $(A^r_w(\Cc))_{w,r}$ its GRW distribution. 
    \begin{enumerate}
        \item If $r=0$, we have $A_0^0(\Cc) = 1$ and $A^0_w=0$, for all $w>0$.
        \item For all $w\geq 0$ and $r>k$, we have $A^r_w(\Cc) =0$.
        \item For all $w<r$ or $w>n$, we have $A^r_w(\Cc) =0$.
        \item We have $\displaystyle A^1_w(\Cc) =\frac{\#\{ \cv \in \Cc \ \mid \ \text{\normalfont wt}_R(\cv)=w \}}{q^m-1}$.
    \end{enumerate}
    \begin{proof}
        Items $1$ and $2$ come directly from the definition of $A_w^r(\Cc)$. For item $3$, since the sequence $(M_r(\Cc))_r$ is strictly increasing, then necessarily $w\geq r$, and by Remark \ref{MinMaxRW}, necessarily $w\leq n$ . For the last item, we used the fact that two $\FF_{q^m}$-proportional codewords have the same rank weight, see for instance  \cite[Proposition 2.2]{BerhuyFaselGarotta}. Thus all nonzero codewords in a $1$-dimensional subspace of $\Cc$ have the same rank weight.
    \end{proof}
\end{proposition}

Thank to this proposition, we can constrain the computation of the GRW distribution $(A^r_w(\Cc))_{w,r}$ to the computation of a finite number of the $A^r_w(\Cc)$. More precisely, we only have to compute $A^r_w(\Cc)$ for $r\in \llbracket 1,k\rrbracket$ and $w\in \llbracket r,n\rrbracket$.

\begin{definition}
    The generalized rank weight enumerator of an $[n,k]$ code are the homogeneous polynomials counting for each dimension $r$ the number $A^r_w(\Cc)$ :
\begin{eqnarray*}
W^{r}_{\mathcal{C}}(X,Y) \defegal \sum_{w=0}^n A^{r}_w(\mathcal{C})X^wY^{n-w},\\
\end{eqnarray*}  
for $r=0,\ldots,k$.
\end{definition}

\subsection{Gaussian binomial coefficient}

\begin{definition}
    For $q$ a prime power and $n,k$ two nonnegative integers, we define the \textit{Gaussian binomial coefficients} or \textit{q-binomial coefficients} to be the following integers : $$\begin{bmatrix}
        n\\ k
    \end{bmatrix}_q \defegal \left\{  \begin{array}{cl}
       0  & \text{if } k>n   \\
       1  & \text{if } k=0 \text{ or } n=0\\
       \dfrac{(1-q^n)(1-q^{n-1})\cdots (1-q^{n-k+1})}{(1-q^k)(1-q^{k-1})\cdots (1-q)} & \text{else}
    \end{array}\right.$$
\end{definition}

The relevant properties of these coefficients are summarized in the next lemma.

\begin{lemma}\label{lemmaq-Coeff}
For $a,b,c$ nonnegative integers, the following properties hold.
    \begin{enumerate}
        \item The number $\begin{bmatrix}
        a\\ b
    \end{bmatrix}_q$ is the number of $b$-dimensional subspaces of an $a$-dimensional vector space over $\FF_q$.
        \item Reflection property : $\displaystyle \begin{bmatrix}
            a\\b
        \end{bmatrix}_q = \begin{bmatrix}
            a\\a-b
        \end{bmatrix}_q$.
        \item We have $\begin{bmatrix}
            a\\b
        \end{bmatrix}_q\begin{bmatrix}
            b\\c
        \end{bmatrix}_q = \begin{bmatrix}
            a\\c
        \end{bmatrix}_q\begin{bmatrix}
            a-c\\a-b
        \end{bmatrix}_q$.
        \item We have a $q$-analogue formula of the binomial theorem :  $$\prod_{k=0}^{n-1}(x+q^ky) = \sum_{k=0}^nq^{\frac{k(k-1)}{2}}\begin{bmatrix}
            n\\k
        \end{bmatrix}_qx^{n-k}y^k, \ \ \ \ \forall x,y\in \mathbb{C}.$$
        \item We have a $q$-analogue of the Vandermonde identity : $$\begin{bmatrix}
            a+b\\c
        \end{bmatrix}_q=\sum_{l=0}^cq^{l(l+b-c)}\begin{bmatrix}
            a\\l
        \end{bmatrix}_q\begin{bmatrix}
            b\\c-l
        \end{bmatrix}_q.$$
    \end{enumerate}
\end{lemma}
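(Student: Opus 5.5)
The plan is to prove the five items in order, using the counting interpretation (item 1) as the engine for items 2, 3 and 5. Item 4 is handled separately by induction.

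For item 1, count ordered $b$-tuples of linearly independent vectors in $\FF_q^a$. There are $(q^a-1)(q^a-q)\cdots(q^a-q^{b-1})$ of them. Each $b$-dimensional subspace is spanned by exactly $(q^b-1)(q^b-q)\cdots(q^b-q^{b-1})$ such tuples. Dividing and cancelling the powers of $q$ gives the stated quotient. The boundary cases $b>a$ (no such subspaces) and $b=0$ (only the zero subspace) match the definition.

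For item 2, use the bijection $U\mapsto U^{\perp}$ between $b$-dimensional and $(a-b)$-dimensional subspaces of $\FF_q^a$, taken with respect to the standard bilinear form. Alternatively, reindex the product formula directly.

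For item 3, double count the pairs $(U,V)$ with $\dim U=c$, $\dim V=b$ and $U\subset V\subset \FF_q^a$.
\begin{itemize}
\item Choosing $V$ first and then $U$ inside $V$ gives the left-hand side.
\item Choosing $U$ first, the subspaces $V\supset U$ of dimension $b$ correspond to $(b-c)$-dimensional subspaces of $\FF_q^a/U$. Their number is $\begin{bmatrix} a-c\\ b-c\end{bmatrix}_q$, which equals $\begin{bmatrix} a-c\\ a-b\end{bmatrix}_q$ by item 2.
\end{itemize}
When $c>b$ or $b>a$ both sides vanish, so these cases need a quick separate check.

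For item 4, argue by induction on $n$. Multiply the identity for $n$ by $(x+q^ny)$ and compare coefficients of $x^{n+1-k}y^k$. This reduces to the $q$-Pascal rule
$$\begin{bmatrix} n+1\\ k\end{bmatrix}_q=\begin{bmatrix} n\\ k\end{bmatrix}_q+q^{n-k+1}\begin{bmatrix} n\\ k-1\end{bmatrix}_q,$$
after checking that the exponents satisfy $q^{k(k-1)/2}\cdot q^{n-k+1}\cdot q^{-(k-1)(k-2)/2}=q^{n}$, which is consistent. The $q$-Pascal rule itself follows from the explicit formula by putting both terms over a common denominator. It can also be obtained combinatorially, by splitting subspaces of $\FF_q^{n+1}$ according to whether they lie in a fixed hyperplane $H$. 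There are $\begin{bmatrix} n\\ k\end{bmatrix}_q$ that do. Those that do not meet $H$ in a $(k-1)$-dimensional subspace $W$, and for each $W$ there are $q^{n-k+1}$ such subspaces.

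For item 5, count $c$-dimensional subspaces $U$ of $\FF_q^{a+b}=A\oplus B$ with $\dim A=a$ and $\dim B=b$, grouped by $l=\dim(\pi_A(U))$, where $\pi_A$ is the projection onto $A$ along $B$. I expect this count to be the main obstacle, because getting the exponent $l(l+b-c)$ exactly right is delicate.

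Set $K=U\cap B$, which has dimension $c-l$, and $P=\pi_A(U)$, which has dimension $l$. Then $U$ is determined by the pair $(P,K)$ together with a linear map $P\to B/K$. So the number of $U$ with given $P$ and $K$ is $q^{l(b-(c-l))}=q^{l(l+b-c)}$. Summing over the $\begin{bmatrix} a\\ l\end{bmatrix}_q$ choices of $P$ and the $\begin{bmatrix} b\\ c-l\end{bmatrix}_q$ choices of $K$ gives the formula.

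The point to verify carefully is that the correspondence $U\leftrightarrow (P,K,\varphi)$ is a bijection. Concretely, $U=\{p+\tilde\varphi(p)+k\}$ for a lift $\tilde\varphi$ of $\varphi$, with $p\in P$ and $k\in K$, and this description does not depend on the lift chosen.

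If that bookkeeping proves troublesome, the fallback is to prove item 5 by induction on $a$ using the $q$-Pascal rule from item 4.
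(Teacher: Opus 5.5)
Your proposal is correct. It takes a different route from the paper, which gives no argument of its own: it cites Stanley for items 1--3 and 5 and Kac for item 4. You instead give self-contained proofs, and each step holds up:
\begin{itemize}
\item The ordered-basis count gives the defining quotient after cancelling $q^{b(b-1)/2}$.
\item $U\mapsto U^{\perp}$ is a bijection because the standard form is nondegenerate.
\item The flag double count gives item 3.
\item The $q$-Pascal rule with factor $q^{n-k+1}$ is exactly what the inductive step of item 4 needs, and your exponent check is right.
\item In item 5, $U\leftrightarrow(P,K,\varphi)$ is a genuine bijection. Two lifts of $\varphi$ differ by a map $P\to K$, and $\dim U=\dim P+\dim K$ by rank--nullity for $\pi_A|_U$. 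So each fibre has exactly $q^{l(b-(c-l))}=q^{l(l+b-c)}$ elements, and your fallback is unnecessary.
\end{itemize}

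The citation route buys brevity. Yours buys a uniform subspace-counting viewpoint, which matches how item 1 is used in the paper's later double counts. It also explains the specific exponent in item 5: projecting onto $A$ along $B$ gives $q^{l(l+b-c)}$, while projecting onto $B$ along $A$ would give the other standard form $q^{(a-l)(c-l)}$.

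Two minor caveats:
\begin{itemize}
\item The combinatorial arguments require $q$ to be a prime power. That is enough here, since the paper only uses bases $q$ and $q^m$, and your algebraic proof of the $q$-Pascal rule covers general $q$ in item 4 anyway.
\item The boundary cases of items 2 and 3 with $b>a$ or $c>a$ involve negative indices such as $a-b$, which the paper's definition does not cover. Your ``quick separate check'' therefore relies on the convention that such coefficients are $0$.
\end{itemize}
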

\begin{proof}
    For items $1,2,3$, \cite[Chapter 1]{CombinatoricStanley} is a good introduction to $q$-binomial coefficients and their properties. Items 4 and 5 may be found, respectively, in \cite[Chapter 5]{QuantumKac} and \cite[Chapter 1, Correction of exercise 100]{CombinatoricStanley}.
\end{proof}

\section{MacWilliams identity on the Generalized Rank Weight distribution.}\label{Section3}

To obtain a MacWilliams identity for the generalized rank weight distribution, we draw from the framework proposed by Ravagnani in \cite[Section 4]{Ravagnani}.

\begin{notation}
For the beginning of this section, we work with any finite extension $\LL/\KK$. We will indicate to the reader when we require the finiteness of the fields.
\end{notation}

\begin{notation}
    For a subspace $U\subset \KK^n$, we denote by $\Ul\subset \LL^n$ the $\LL$-subspace spanned by all the elements of $U$. 
\end{notation}

\begin{lemma}\label{lemmeU_L}
    Let $U$ be a subspace of $\KK$. The following hold : 
    \begin{enumerate}
        \item $\dim_\KK U = \dim_\LL \Ul$.
        \item $(U^{\perp_K})_\LL = ( \Ul)^{\perp_L}$, where $E^{\perp_K}$ is the orthogonal of $E\subset \KK^n$ in $\KK^n$ and $F^{\perp_L}$ is the orthogonal of $F\subset \LL^n$ in $\LL^n$
    \end{enumerate}  
\end{lemma}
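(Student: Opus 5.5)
The plan is to reduce both items to a single fact: a $\KK$-linearly independent family in $\KK^n$ stays $\LL$-linearly independent in $\LL^n$. (The statement says ``subspace of $\KK$''; we read it as $U\subset\KK^n$, as in the notation just above.)

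For item 1, pick a $\KK$-basis $u_1,\dots,u_d$ of $U$. By definition $\Ul$ is spanned over $\LL$ by the elements of $U$, hence by $u_1,\dots,u_d$, so $\dim_\LL \Ul\le d$. For the reverse inequality, put the $u_i$ as the rows of a $d\times n$ matrix $A$ with entries in $\KK$. Since the rows are $\KK$-independent, $A$ has rank $d$ over $\KK$, so some $d\times d$ minor of $A$ is nonzero. This minor is an element of $\KK\subset\LL$, so it is also nonzero in $\LL$, and $A$ has rank $d$ over $\LL$. Hence the $u_i$ are $\LL$-independent and $\dim_\LL\Ul=d$.

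For item 2, we first show $(U^{\perp_K})_\LL\subset(\Ul)^{\perp_L}$. Take $v\in U^{\perp_K}$. It is orthogonal to every $u\in U$. The standard inner product is $\LL$-bilinear, so $v$ is orthogonal to every $\LL$-combination of elements of $U$, that is, to all of $\Ul$. Thus $v\in(\Ul)^{\perp_L}$. Since $(\Ul)^{\perp_L}$ is an $\LL$-subspace, it contains the $\LL$-span of $U^{\perp_K}$, which is $(U^{\perp_K})_\LL$.

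Next we compare dimensions. Item 1 applied to $U^{\perp_K}$ gives $\dim_\LL (U^{\perp_K})_\LL=\dim_\KK U^{\perp_K}=n-\dim_\KK U$. On the other side, $\dim_\LL(\Ul)^{\perp_L}=n-\dim_\LL\Ul$, which equals $n-\dim_\KK U$ by item 1 again. Here we use that the standard form is nondegenerate over any field, so $\dim E^{\perp}=n-\dim E$ holds over both $\KK$ and $\LL$. The two spaces have equal dimension and one contains the other, so they are equal.

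The only step that takes any thought is the independence claim in item 1; the minor argument handles it. Alternatively, one can extend $u_1,\dots,u_d$ to a $\KK$-basis of $\KK^n$, which is then an invertible matrix over $\KK$, hence over $\LL$. Nothing in the argument requires the fields to be finite.
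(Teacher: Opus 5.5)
Your proof is correct and, for item 2, matches the paper's: you prove the inclusion $(U^{\perp_K})_\LL\subset(\Ul)^{\perp_L}$ by bilinearity, then get equality by a dimension count that rests on item 1. The one difference is that the paper does not prove item 1 but cites Lemma 3.6 of Berhuy--Molina, whereas your nonzero-minor argument makes the lemma self-contained and valid for any field extension.
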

\begin{proof}
    For item $1$, one may see \cite[Lemma 3.6]{BerhuyMolina}.

    For item $2$, using the first item, we first have \begin{eqnarray*}
        \dim_\LL (\Ul)^{\perp_L} & =  &\dim_\LL\LL^n- \dim_\LL \Ul \\ & =& \dim_\LL\LL^n- \dim_\KK U \\ & =& \dim_\KK\KK^n - \dim_\KK U \\ & = &\dim_\KK U^{\perp_K} \\ & =& \dim_\LL (U^{\perp_K})_\LL
    \end{eqnarray*}
    Now, to obtain the equality, it suffices to prove the inclusion $(U^{\perp_K})_\LL \subset (\Ul)^{\perp_L}$.
    Let us take $x\in (U^{\perp_K})_\LL$ and $y\in\Ul$. We can write $x=\sum_i\lambda_i\tilde{u_i}$, with $\lambda_i\in \LL$ and $\tilde{u_i}\in U^{\perp_K}$ and also $y=\sum_j\mu_ju_j$ with $\mu_j\in \LL$ and $u_j\in U$.
    We then compute : 
    \begin{eqnarray*}
        \langle x,y\rangle_\LL  =  \langle \sum_i\lambda_i\tilde{u_i},\sum_j\mu_ju_j \rangle_\LL =  \sum_i\lambda_i\sum_j\mu_j \langle\tilde{u_i},u_j\rangle_\LL =  \sum_i\lambda_i\sum_j\mu_j \langle\tilde{u_i},u_j\rangle_\KK = 0
    \end{eqnarray*}
    Last two equalities are obtained by definition of $\tilde{u_i}$ and $u_j$. Finally, we have $x\in (\Ul)^{\perp_\LL}$
\end{proof}

Thanks to this lemma, for the rest of the paper, we will write $A^\perp$ to denote the orthogonal complement of a subspace A, whether in $\LL$ or in $\KK$, to ease notation.

\begin{notation}\label{DefCU}
    For a subspace $\Cc$ of $\LL^n$ and a subspace $U$ of $\KK^n$, we define the set \[ \Cc(U) \defegal \{ \cv \in \Cc \ \mid \ \text{Rsupp}(\cv) \subset U^\perp \}\]. 
\end{notation}

\begin{lemma}\label{caracterisatinC(U)}
    Let $U$ be a subspace of $\KK^n$ and $\Cc$ be a subspace of $\LL^n$.
    \begin{enumerate}
        \item For all $\cv \in \LL^n$, we have Rsupp($\cv) \subset U$ if and only if $\cv \in \Ul$.
        \item We have the following equality : $\Cc(U) = \Cc\cap (\Ul)^\perp$. In particular, $\Cc(U)$ is an $\LL$-vector space.
        \item For all subspaces $\Dc \subset \Cc \subset \LL^n$, we have $\Dc \subset \Cc(U)$ if and only if Rsupp$(\Dc) \subset U^\perp$.
    \end{enumerate}
\end{lemma}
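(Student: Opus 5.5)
Item 1 is the core; items 2 and 3 follow from it together with Lemma \ref{lemmeU_L}.

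For item 1, fix a $\KK$-basis $\beta_1,\dots,\beta_m$ of $\LL$. Write each coordinate $c_j=\sum_i a_{ij}\beta_i$ with $a_{ij}\in\KK$. Then $\cv=\sum_i \beta_i \mathbf{a}_i$, where $\mathbf{a}_i=(a_{i1},\dots,a_{in})\in\KK^n$ are the rows of $M_{\mathcal{B}}(\cv)$. By definition, Rsupp$(\cv)$ is the $\KK$-span of the $\mathbf{a}_i$.

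If Rsupp$(\cv)\subset U$, then each $\mathbf{a}_i\in U$, so $\cv=\sum_i\beta_i\mathbf{a}_i\in\Ul$.

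Conversely, suppose $\cv\in\Ul$. Choose a $\KK$-basis $u_1,\dots,u_s$ of $U$ and complete it to a $\KK$-basis $u_1,\dots,u_n$ of $\KK^n$. A $\KK$-basis of $\KK^n$ is also an $\LL$-basis of $\LL^n$, since linear independence over $\KK$ of vectors in $\KK^n$ is a rank condition on a $\KK$-matrix, hence equivalent to linear independence over $\LL$. So $\cv\in\Ul$ means $\cv=\sum_{t\le s}\lambda_t u_t$ with $\lambda_t\in\LL$. Expand $\lambda_t=\sum_i \lambda_{ti}\beta_i$ with $\lambda_{ti}\in\KK$. Regrouping gives $\cv=\sum_i\beta_i\big(\sum_t\lambda_{ti}u_t\big)$.

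The decomposition $\cv=\sum_i\beta_i\mathbf{a}_i$ with $\mathbf{a}_i\in\KK^n$ is unique, because $\LL^n\cong\LL\otimes_\KK\KK^n$; coordinatewise this is just uniqueness of coordinates in the basis $\beta$. Hence $\mathbf{a}_i=\sum_t\lambda_{ti}u_t\in U$, so Rsupp$(\cv)\subset U$. This uniqueness step is the only place where care is needed, and it is the step I would write out in full.

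For item 2, apply item 1 with $U^\perp$ in place of $U$. For $\cv\in\Cc$, this gives Rsupp$(\cv)\subset U^\perp$ if and only if $\cv\in(U^\perp)_\LL$. By Lemma \ref{lemmeU_L}(2), $(U^\perp)_\LL=(\Ul)^\perp$. Hence $\Cc(U)=\Cc\cap(\Ul)^\perp$, which is an intersection of two $\LL$-subspaces and therefore an $\LL$-vector space.

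For item 3, recall that Rsupp$(\Dc)$ is the $\KK$-span of the Rsupp$(\mathbf{d})$ for $\mathbf{d}\in\Dc$. Since $U^\perp$ is a $\KK$-subspace, Rsupp$(\Dc)\subset U^\perp$ holds if and only if Rsupp$(\mathbf{d})\subset U^\perp$ for every $\mathbf{d}\in\Dc$. Because $\Dc\subset\Cc$, this says exactly that every $\mathbf{d}\in\Dc$ lies in $\Cc(U)$, i.e. $\Dc\subset\Cc(U)$, by Notation \ref{DefCU}.
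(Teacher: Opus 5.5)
Your proof is correct and follows essentially the same route as the paper. For item 1 you expand coordinates in a $\KK$-basis of $\LL$; for item 2 you apply item 1 to $U^\perp$ and use Lemma \ref{lemmeU_L}(2); for item 3 you use that $U^\perp$ is a $\KK$-subspace. You also write out the uniqueness of the decomposition $\cv=\sum_i\beta_i\mathbf{a}_i$, which the paper uses only implicitly. Your completion of a basis of $U$ to a basis of $\KK^n$ is harmless but unnecessary, since any $\KK$-spanning set of $U$ already spans $\Ul$ over $\LL$.
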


\begin{proof}

    Let $(\alpha_i)$ be a $\KK$-basis of $\LL$ and take $\cv=(c_1,...,c_n)\in \LL^n$ such that Rsupp($\cv$)$\subset U$. We can write for all $i\in \llbracket 1,n\rrbracket$, $c_i = \sum_{j=1}^mc_{ji}\alpha_j$n with $c_{ij}\in \KK$. Then Rsupp$(\cv) = \text{Span}_\KK(v_1,...,v_m)$, where $v_i \defegal (c_{1i},\dots,c_{ni})$. By assumption $v_i\in U$. Finally, we have $\cv=\sum_{s=1}^m\alpha_sv_s \in \Ul$
    Conversely, assume that $\cv\in \Ul$, in other words, $\cv = \sum_{i=1}^t\lambda_iu_i$, for some $t$ positive integer, $\lambda_i \in \LL$ and $u_i\in U$. As previously, we write $\lambda_i = \sum_{j=1}^m\mu_{ji}\alpha_j$. We then have $\cv = \sum_{i=1}^t\lambda_iu_i = \sum_{j=1}^m \left( \sum_{i=1}^t \mu_{ji}u_i\right)\alpha_j$. In this case, we finally get Rsupp$(\cv) = \text{Span}_\KK(\sum_{i=1}^t \mu_{1i}u_i, \dots , \sum_{i=1}^t \mu_{mi}u_i)\subset U$, since $u_i\in U$. The second point is then clear. 

    \noindent For the last point, if $\Dc \subset \Cc(U)$ then Rsupp($\textbf{d})\in U^\perp$ for all $\textbf{d} \in \Dc$. Since $U^\perp$ is a subspace and by definition of Rsupp$(\Dc)$ (see Definition \ref{DefJurriusPelik}), Rsupp$(\Dc) \subset U^\perp$. The converse is obtained by a similar line of reasoning.
\end{proof}

\begin{proposition}\label{DimC(U)}
    We have $\dim_\LL\Cc(U) = \left\{ \begin{array}{rl}
       0  & \text{if } \dim_\KK U >n-M_1(\Cc)   \\
       \dim_\LL \Cc - \dim_\KK U  & \text{if } \dim_\KK U < M_1(\Cc^\perp) 
    \end{array}\right.$
\end{proposition}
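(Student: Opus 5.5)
The plan is to work with the description $\Cc(U)=\Cc\cap(\Ul)^\perp$ from Lemma \ref{caracterisatinC(U)}, item 2. By Lemma \ref{lemmeU_L}, item 2, this equals $\Cc\cap (U^\perp)_\LL$, and by Lemma \ref{lemmeU_L}, item 1, we have $\dim_\LL (U^\perp)_\LL = n-\dim_\KK U$. Both cases then reduce to controlling the intersection of $\Cc$ with the Galois-closed space $V\defegal (U^\perp)_\LL$. Such a space is spanned by vectors of $\KK^n$, so it is fixed by the coordinatewise Frobenius; over a finite extension this means $V^*=V$.

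\emph{First case: $\dim_\KK U > n-M_1(\Cc)$.} Suppose $\Cc(U)\neq 0$ and pick a nonzero $\cv\in\Cc(U)$. By Notation \ref{DefCU}, $\mathrm{Rsupp}(\cv)\subset U^\perp$. Hence
\[
\text{wt}_R(\cv)\le \dim_\KK U^\perp = n-\dim_\KK U < M_1(\Cc).
\]
This contradicts Remark \ref{MinMaxRW}, which says $M_1(\Cc)$ is the minimum rank weight of a nonzero codeword. So $\Cc(U)=0$.

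\emph{Second case: $\dim_\KK U<M_1(\Cc^\perp)$.} Dualize. The orthogonal of $\Cc\cap V$ in $\LL^n$ is $\Cc^\perp+V^\perp$, and $V^\perp=\Ul$ by Lemma \ref{lemmeU_L}. Therefore
\[
\dim_\LL \Cc(U) = n-\dim_\LL(\Cc^\perp+\Ul)=n-\bigl(n-\dim\Cc+\dim_\KK U-\dim_\LL(\Cc^\perp\cap\Ul)\bigr).
\]
It remains to show $\Cc^\perp\cap\Ul=0$. Any $\cv'\in\Cc^\perp\cap\Ul$ has $\mathrm{Rsupp}(\cv')\subset U$ by Lemma \ref{caracterisatinC(U)}, item 1. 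So if $\cv'\neq 0$, its rank weight is at most $\dim_\KK U<M_1(\Cc^\perp)$, which is impossible by Remark \ref{MinMaxRW} applied to $\Cc^\perp$. Hence the intersection is zero, and $\dim_\LL\Cc(U)=\dim_\LL\Cc-\dim_\KK U$.

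The main obstacle is the bookkeeping of orthogonals over $\LL$ versus $\KK$. One must be careful that $(\Ul)^{\perp}=(U^\perp)_\LL$, which is Lemma \ref{lemmeU_L}, item 2, and that the identity $(A\cap B)^\perp=A^\perp+B^\perp$ holds for the nondegenerate standard form on $\LL^n$. The rest is a direct use of the minimum-weight characterization. I would also remark that the two cases are compatible, i.e.\ they cannot both apply. By Proposition \ref{PropPoidsgeneralites}, item 3, the values $n+1-M_1(\Cc^\perp)$ and $M_1(\Cc)$ are related so that $M_1(\Cc^\perp)\le n-M_1(\Cc)+1$, which rules out the two hypotheses holding simultaneously.
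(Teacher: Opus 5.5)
Both of your cases are correct, and the argument is essentially the paper's. The first case is the same rank-weight bound. The second is the same identity $\Cc(U)^\perp=\Cc^\perp+\Ul$ followed by the Grassmann formula.

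The only variation is how you show $\Cc^\perp\cap\Ul=0$. The paper appeals to Definition \ref{DefKuriMatsuUye}: the Galois-closed space $\Ul$ has dimension $t<M_1(\Cc^\perp)$, so it cannot meet $\Cc^\perp$. You instead use Lemma \ref{caracterisatinC(U)}(1) together with the minimum-weight characterization of $M_1$. Your route is slightly more self-contained, since it does not rely on the equivalence of the GRW definitions. The Galois-closedness of $V$ that you set up in your first paragraph is then never used.

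Your closing remark, however, is false. You say the two cases ``cannot both apply'', but they can. Take an $[n,k]$ MRD code, so $M_1(\Cc)=n-k+1$ and $M_1(\Cc^\perp)=k+1$. Then $t=k$ satisfies both $t>n-M_1(\Cc)=k-1$ and $t<M_1(\Cc^\perp)=k+1$. Your inequality $M_1(\Cc^\perp)\le n-M_1(\Cc)+1$ would read $k+1\le k$ in this case.

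What actually holds, for instance from the Singleton bounds $M_1(\Cc)\le n-k+1$ and $M_1(\Cc^\perp)\le k+1$, is $M_1(\Cc)+M_1(\Cc^\perp)\le n+2$. Equality occurs exactly in the MRD case. The correct consistency statement is the paper's Remark \ref{Rmq_2cas}: an overlap forces $t=k$, and there both formulas give $0$.

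This error does not damage your proof. Each case is established independently, so on any overlap both values are proven and must coincide. Still, the remark should be corrected rather than kept as stated.
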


\begin{proof}
    Let $t\defegal \dim_\KK U$ and assume that $t> n-M_1(\Cc)$. Let $\cv \in \mathcal{C}(U)$. By definition, $U$ is contained in $\text{Rsupp}(\cv)^{\perp}$ and thus $t\leq n-\text{wt}_R(\cv)$. It follows that wt$_R(\cv) \leq n-t < M_1(\Cc)$ and $\cv$ is necessarily the zero vector by definition of the first rank weight. In such case, $\dim_\LL \Cc(U)=0$.

    Next, assume that $t < M_1(\Cc^\perp)$. Definition \ref{DefKuriMatsuUye} applied to $\Cc^\perp$ yields $\dim_\LL \Ul \cap \Cc^\perp = 0$. We have $\Cc(U)^\perp = (\Cc\cap \Ul^\perp)^\perp = \Cc^\perp + \Ul$ and thus $$\dim_\LL \Cc(U)^\perp= \dim_\LL \Cc^\perp + \dim_\LL \Ul - \dim_\LL \Cc^\perp\cap \Ul = \dim_\LL \Cc^\perp + \dim_\LL \Ul.$$ By Lemma \ref{lemmeU_L} and since $\dim_\LL \Cc(U)^\perp=n-\dim_\LL\Cc(U)$ and $\dim_\LL \Cc^\perp=n-\dim_\LL \Cc$, we get the result.
\end{proof}

\begin{remark}\label{Rmq_2cas}
    The formulas in Proposition \ref{DimC(U)} are consistent in the sense that if both conditions on $\dim_{\KK}U$ hold, then the formulas agree. Indeed, necessarily $t=k$, with $k$ the dimension of $\Cc$ and $t$ the dimension of $U$. Indeed, the Singleton bound for $\Cc$ gives $M_1(\Cc) \leq n-k+1$, that is $k \leq n-M_1(\Cc)+1$. Similarly, $M_1(\Cc^\perp) \leq k+1$. Using these inequalities, if $t>n-M_1(\Cc)$, then $t \geq n-M_1(\Cc) +1\geq k$. Else, if $t< M_1(\Cc^\perp)$, then $t\leq M_1(\Cc^\perp) -1\leq k$. But finally, in such situation, both cases in Proposition $\ref{DimC(U)}$ coincide. 
\end{remark}

\begin{example}\label{Exemple1}
    Consider the $[3,1]$-cyclic code\footnote{For an introduction to cyclic codes, one may read \cite[Chapter 7]{MacWSloaneTheory} or \cite[Chapter 4]{HuffmanPless}} $\Cc \subset \FF_{2^4}[x]/(x^3-1)$ generated by $g(x)=(x+1)(x+\alpha^2+\alpha)$, with $\alpha$ such that $\FF_{2^4} = \FF_2(\alpha)$ and $\alpha^4 = \alpha+1$. We know that $M_1(\Cc)$ can be $1,2$ or $3$. Since no cyclic code is MRD (see \cite[Proposition $37$]{DucOggier1}), then $M_1(\Cc) \neq 3$. Also $M_1(\Cc)\neq 1$ using \cite[Theorem 4.8]{BerhuyMolina}, thus we have $M_1(\Cc) =2$. Moreover, $\Cc^\perp$ is a $[3,2]$ cyclic code generated by $h(x)=x+\alpha^2+\alpha+1$ (see \cite[Proposition 4.17]{BerhuyMolina}). Proposition \ref{PropPoidsgeneralites} $(3)$ yields $M_1(\Cc^\perp)=1$ and $M_2(\Cc^\perp)=3$.

    \noindent Now,  $\dim_{\FF_{2^4}}\Cc(U) = \left\{ \begin{array}{rl}
       0  & \text{if } \dim_{\FF_2} U \geq 2   \\
       1   & \text{if } \dim_{\FF_2} U =0 
    \end{array}\right.$.
    As one may notice, Proposition \ref{DimC(U)} does not give a value for  $\dim_{\FF_{2^4}}\Cc(U)$ when $\dim_{\FF_2} U =1$, so we have to compute it by hand. Let us take $U\subset \FF_2^3$ with $\dim_{\FF_2}U=1$ and write $U=\text{Span}_{\FF_2}\{ (u_1,u_2,u_3)\}$. Since $\Cc = \text{Span}_{\FF_{2^4}}\{ (\alpha^2+\alpha, \alpha^2+\alpha +1, 1)\}$, we get that $$\Cc(U) = \Cc\cap (U^\perp)_{\FF_{2^4}} = \left\{ \begin{array}{rl}
       \Cc  & \text{if } U = \FF_2\cdot (1,1,1)   \\
       \{0\}   & \text{else} 
    \end{array}\right.$$ Finally, $\dim_{\FF_{2^4}}\Cc(U) = \left\{ \begin{array}{rl}
       0  & \text{if } \dim_{\FF_2} U \geq 2 \text{ or } (\dim_{\FF_2} U = 1 \text{ and } U \neq \FF_2\cdot (1,1,1)),  \\
       1  & \text{if } \dim_{\FF_2} U = 0 \text{ or } U = \FF_2\cdot (1,1,1).  \\ 
    \end{array}\right.$
\end{example}

\begin{remark}
    As illustrated by the previous example, the formula given in Proposition \ref{DimC(U)} does not always yield an answer for the dimension of $\Cc(U)$ when the two conditions are disjoint. In such cases, one must determine the dimension of $\Cc(U)$ directly.
\end{remark}

In order to get a MacWilliams identity, we first have to relate $\Cc(U)$ and $\Cc^\perp(U)$.

\begin{lemma}\label{liendimRavagna}
    Let $\Cc$ be an $[n,k]$ code over $\LL$. For any subspace $U\subset \KK$, the following equality holds : 
    \[ \dim_\LL \Cc(U) = \dim_\LL\Cc - \dim_\KK U + \dim_\LL \Cc^\perp(U^\perp) \]
\end{lemma}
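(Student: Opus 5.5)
The plan is to reduce the statement to a pure linear-algebra identity over $\LL$, using the description of $\Cc(U)$ given in Lemma \ref{caracterisatinC(U)} (2), $\Cc(U) = \Cc\cap (\Ul)^\perp$, together with the dimension and orthogonality properties of $\Ul$ from Lemma \ref{lemmeU_L}.

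First, apply Lemma \ref{caracterisatinC(U)} (2) twice. For $\Cc$ and $U$ it gives $\Cc(U) = \Cc\cap (\Ul)^\perp$. For $\Cc^\perp$ and $U^\perp$ it gives $\Cc^\perp(U^\perp) = \Cc^\perp \cap ((U^\perp)_\LL)^\perp$. By Lemma \ref{lemmeU_L} (2), $(U^\perp)_\LL = (\Ul)^\perp$, so $((U^\perp)_\LL)^\perp = \Ul$ by biduality in $\LL^n$. Hence
\[ \Cc^\perp(U^\perp) = \Cc^\perp\cap \Ul. \]
This is the only step where the interplay between $\KK$ and $\LL$ matters, and it is handled entirely by the earlier lemma.

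Next, compute $\dim_\LL \Cc(U)$ by passing to orthogonals, as in the proof of Proposition \ref{DimC(U)}. Since the standard form on $\LL^n$ is nondegenerate,
\[ \Cc(U)^\perp = (\Cc\cap (\Ul)^\perp)^\perp = \Cc^\perp + \Ul. \]
The Grassmann formula then gives
\[ n - \dim_\LL \Cc(U) = (n-\dim_\LL \Cc) + \dim_\LL \Ul - \dim_\LL(\Cc^\perp\cap \Ul). \]
Substituting $\dim_\LL \Ul = \dim_\KK U$ from Lemma \ref{lemmeU_L} (1) and $\Cc^\perp\cap\Ul = \Cc^\perp(U^\perp)$ from the previous step, and rearranging, yields exactly
\[ \dim_\LL \Cc(U) = \dim_\LL\Cc - \dim_\KK U + \dim_\LL \Cc^\perp(U^\perp). \]

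There is no real obstacle. The point to verify carefully is the identification $\Cc^\perp(U^\perp)=\Cc^\perp\cap\Ul$, which requires the compatibility $(U^{\perp})_\LL=(\Ul)^{\perp}$ and the identity $(A\cap B)^\perp = A^\perp+B^\perp$. Both hold because the bilinear form is nondegenerate and the spaces are finite-dimensional, so no finiteness of the fields is needed. In particular, $U$ should be read as a subspace of $\KK^n$.
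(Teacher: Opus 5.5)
Your proof is correct and follows the paper's argument: write $\Cc(U)^\perp = \Cc^\perp + \Ul$, apply the Grassmann formula, and identify $\Cc^\perp\cap\Ul$ with $\Cc^\perp(U^\perp)$. You also make explicit the identifications from Lemma \ref{lemmeU_L}, namely $(U^\perp)_\LL = (\Ul)^\perp$ and $\dim_\LL \Ul = \dim_\KK U$, which the paper's two-line proof leaves implicit by writing $U$ for $\Ul$.
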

\begin{proof}
    By Lemma \ref{caracterisatinC(U)}, we have the equality $\Cc(U)^\perp          = (\Cc \cap U^\perp)^\perp = \Cc^\perp + U$.

    Taking the dimension over $\LL$ on both sides of the equality and using the Grassmann formula, we get the result since $\Cc^\perp \cap U = \Cc^\perp(U^\perp)$.
\end{proof}

From now on, we will assume that $\LL = \FF_{q^m}$ and $\KK=\FF_q$.

\begin{lemma}\label{lemmalienA}
    For $\Cc \subset \LL^n, r\leq \dim \Cc$ and $t\leq n$, we have the following  formula : 

    \[ \sum_{\substack{U\subset \KK^n \\ \dim_\KK U = t}}\begin{bmatrix}
        \dim_\LL \Cc(U) \\ r
    \end{bmatrix}_{q^m} = \sum_{w=0}^n\begin{bmatrix}
        n-w\\t
    \end{bmatrix}_qA^r_w(\Cc)\]
\end{lemma}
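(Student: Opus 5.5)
We prove the identity by double counting the set of pairs
\[
\mathcal{P}_t^r \defegal \{ (U,\Dc) \ \mid \ U\subset \KK^n,\ \dim_\KK U = t,\ \Dc\subset \Cc(U),\ \dim_\LL \Dc = r\}.
\]

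\textbf{Counting by $U$.} Fix $U$. By Lemma \ref{caracterisatinC(U)} (2), $\Cc(U)$ is an $\LL$-vector space of dimension $\dim_\LL \Cc(U)$. Since $\LL=\FF_{q^m}$, Lemma \ref{lemmaq-Coeff} (1) shows that the number of $r$-dimensional $\LL$-subspaces $\Dc\subset\Cc(U)$ is $\begin{bmatrix} \dim_\LL \Cc(U)\\ r\end{bmatrix}_{q^m}$. This holds also when $\dim_\LL\Cc(U)<r$, because the coefficient is then $0$ by definition. Summing over all $U$ with $\dim_\KK U=t$ gives the left-hand side.

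\textbf{Counting by $\Dc$.} Every $\Dc$ in a pair is an $r$-dimensional subspace of $\Cc$, since $\Cc(U)\subset\Cc$. Conversely, fix an $r$-dimensional $\Dc\subset\Cc$. By Lemma \ref{caracterisatinC(U)} (3), $\Dc\subset\Cc(U)$ holds if and only if $\text{Rsupp}(\Dc)\subset U^\perp$. Taking orthogonals in $\KK^n$ (nondegenerate form, so $(U^\perp)^\perp=U$ and inclusions reverse), this is equivalent to $U\subset \text{Rsupp}(\Dc)^\perp$. The space $\text{Rsupp}(\Dc)^\perp$ has $\KK$-dimension $n-\text{wt}_R(\Dc)$. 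So the number of admissible $U$ of dimension $t$ is $\begin{bmatrix} n-\text{wt}_R(\Dc)\\ t\end{bmatrix}_q$, again by Lemma \ref{lemmaq-Coeff} (1).

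\textbf{Conclusion.} Group the $r$-dimensional subspaces $\Dc$ according to $w=\text{wt}_R(\Dc)\in\llbracket 0,n\rrbracket$. Each class contains exactly $A^r_w(\Cc)$ subspaces, so the second count is $\sum_{w=0}^n \begin{bmatrix} n-w\\ t\end{bmatrix}_q A^r_w(\Cc)$. Equating the two counts of $\#\mathcal{P}_t^r$ gives the formula.

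The only step needing care is the orthogonality reversal $\text{Rsupp}(\Dc)\subset U^\perp \iff U\subset \text{Rsupp}(\Dc)^\perp$. It is standard over $\FF_q$: the form is nondegenerate even though it need not be anisotropic, so $\dim U^\perp = n-\dim U$ and $(U^\perp)^\perp = U$ still hold. Everything else is bookkeeping, with the convention that a $q$-binomial coefficient vanishes when its top entry is smaller than its bottom entry.
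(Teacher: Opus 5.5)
Your proof is correct and is essentially the paper's own argument: a double count of the pairs $(U,\Dc)$, first fixing $U$ and applying Lemma~\ref{lemmaq-Coeff}(1) over $\FF_{q^m}$, then fixing $\Dc$ and applying Lemma~\ref{caracterisatinC(U)}(3). You also spell out the step $\text{Rsupp}(\Dc)\subset U^\perp \iff U\subset \text{Rsupp}(\Dc)^\perp$ together with $\dim \text{Rsupp}(\Dc)^\perp = n-w$, which the paper leaves implicit; the equivalence itself only uses symmetry of the form, and nondegeneracy is needed only for the dimension count.
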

\begin{proof}
    This proof is based on counting the elements of the following set in two different ways : $$X\defegal \{ (\Dc,U) \ ; \ U \subset \FF_q^n, \dim_{\FF_q} U =t, \Dc\subset \Cc(U), \dim_\LL \Dc =r\}$$

    On the one hand, for a fixed $U$ with $\dim U =t$, the number of subspace $\Dc$ such that $(\Dc,U) \in X$ is $\begin{bmatrix}
        \dim_\LL \Cc(U) \\ r
    \end{bmatrix}_{q^m}$. So,  $\#X = \displaystyle\sum_{\substack{U\subset \KK^n \\ t=\dim_\KK U}} \begin{bmatrix}
        \dim_\LL\Cc(U) \\r
    \end{bmatrix}_{q^m}$.
    
    On the other hand, there are $A_w^r(\Cc)$ subspaces $\Dc$ of $\Cc$ with dimension $r$ and $\text{wt}_R(\Dc)=w$. Finding a subspace $U$ such that $\Dc \subset \Cc(U)$ can be rewritten as Rsupp$(\Dc) \subset U^\perp$ by Lemma \ref{caracterisatinC(U)}. Since $\dim_\KK \text{Rsupp}(\Dc) = \text{wt}_R(\Dc) =w$, the number of such $U$ is then $\begin{bmatrix}
        n-w\\t
    \end{bmatrix}_q$ by Lemma \ref{lemmaq-Coeff}. Finally, summing over all the possible weights $w$, we get the result.
\end{proof}

\begin{example}\label{Exemple2}
    Let us keep the code $\Cc$ defined in Example \ref{Exemple1}.  For $r=0$ and $r\geq 2$, the GRW distribution is completely determined by Proposition \ref{PropDistribution}. Then, for $r=1$, we use the previous Lemma \ref{lemmalienA} and we get several equations for different choices of the parameter $t$ : 
    \begin{enumerate}[label=$\bullet$]
        \item For $t=0$ : $\begin{bmatrix}
            1\\1
        \end{bmatrix}_{2^4} = A^1_0(\Cc) + A^1_1(\Cc) + A^1_2(\Cc) + A^1_3(\Cc)$
        \item For $t=1$ : $\begin{bmatrix}
            1\\1
        \end{bmatrix}_{2^4} = \begin{bmatrix}
            3\\1
        \end{bmatrix}_2A^1_0(\Cc) + \begin{bmatrix}
            2\\1
        \end{bmatrix}_2A^1_1(\Cc) + \begin{bmatrix}
            1\\1
        \end{bmatrix}_2A^1_2(\Cc)$
        \item For $t=2$ : $0 = \begin{bmatrix}
            3\\2
        \end{bmatrix}_2A^1_0(\Cc) + \begin{bmatrix}
            2\\2
        \end{bmatrix}_2A^1_1(\Cc)$
        \item  For $t=3$ : $0=\begin{bmatrix}
            3\\3
        \end{bmatrix}_2A^1_0(\Cc)$
    \end{enumerate}
    Solving those four equations, we find that $(A^1_0(\Cc),A^1_1(\Cc),A^1_2(\Cc),A^1_3(\Cc))= (0,0,1,0)$. 
\end{example}

From these two previous lemmas, we can derive a MacWilliams identity relating the generalized rank weight distributions of a code $\Cc$ and its dual $\Cc^\perp$. Sometimes, it is named as the \textit{moments} of MacWilliams identity.

\begin{theorem}\label{MacWIdentDistribution}
    Let $\Cc$ be an $[n,k]$ linear code over $\FF_{q^m}$. Let $\Cc^\perp$ be the dual code of $\Cc$.

    \noindent Then for all $0\leq t \leq n-k$ and all $r\geq 0$, we have the following formula : 

    \[ \sum_{w=0}^n\begin{bmatrix}
        n-w\\t
    \end{bmatrix}_qA^r_w(\Cc^\perp) = \sum_{p=0}^n\sum_{l=0}^rA_p^l(\Cc)q^{ml(l+n-k-t-r)}\begin{bmatrix}
        n-p\\n-t
    \end{bmatrix}_q\begin{bmatrix}
        n-k-t\\r-l
    \end{bmatrix}_{q^m} \]
\end{theorem}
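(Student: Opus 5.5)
Apply Lemma \ref{lemmalienA} to $\Cc^\perp$, rewrite each summand with Lemma \ref{liendimRavagna}, expand the resulting Gaussian coefficient with the $q^m$-Vandermonde identity, and finally swap the order of summation and reuse Lemma \ref{lemmalienA} for $\Cc$.

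\textbf{Step 1: pass to the dual.} Lemma \ref{lemmalienA} applied to $\Cc^\perp$ gives that the left-hand side equals
\[ \sum_{\dim U=t}\begin{bmatrix}\dim \Cc^\perp(U)\\ r\end{bmatrix}_{q^m}. \]
Lemma \ref{liendimRavagna} applied to $\Cc^\perp$, using $(\Cc^\perp)^\perp=\Cc$, yields
\[ \dim\Cc^\perp(U)=(n-k-t)+\dim\Cc(U^\perp). \]
Since $t\le n-k$, the quantity $a\defegal n-k-t$ is nonnegative.

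\textbf{Step 2: expand the Gaussian coefficient.} Write $b\defegal\dim\Cc(U^\perp)$. The $q$-Vandermonde identity (Lemma \ref{lemmaq-Coeff}, item 5) over $q^m$, with $c=r$, gives
\[ \begin{bmatrix}a+b\\ r\end{bmatrix}_{q^m}=\sum_{l=0}^r q^{ml(l+a-r)}\begin{bmatrix}b\\ l\end{bmatrix}_{q^m}\begin{bmatrix}a\\ r-l\end{bmatrix}_{q^m}. \]
Here the roles of $a$ and $b$ must be chosen so that the exponent is $l(l+a-r)$ with $a=n-k-t$, which is what the statement requires. The identity is symmetric in $a$ and $b$, so this choice is allowed. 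With it, the exponent $ml(l+n-k-t-r)$ and the factor $\begin{bmatrix}n-k-t\\ r-l\end{bmatrix}_{q^m}$ depend only on $l$ and not on $U$.

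\textbf{Step 3: sum over $U$ and return to $\Cc$.} Pull those $U$-independent factors out of the sum over $U$. What remains is
\[ \sum_{\dim U=t}\begin{bmatrix}\dim\Cc(U^\perp)\\ l\end{bmatrix}_{q^m}. \]
The map $U\mapsto U^\perp$ is a bijection from $t$-dimensional subspaces to $(n-t)$-dimensional subspaces of $\FF_q^n$, so this equals the sum over $\dim V=n-t$ of $\begin{bmatrix}\dim\Cc(V)\\ l\end{bmatrix}_{q^m}$. Lemma \ref{lemmalienA} applied to $\Cc$ turns that into
\[ \sum_p\begin{bmatrix}n-p\\ n-t\end{bmatrix}_q A^l_p(\Cc). \]
Lemma \ref{lemmalienA} was stated for $l\le\dim\Cc$. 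For $l>k$ both sides vanish, since $A^l_p(\Cc)=0$ and $\dim\Cc(V)\le k<l$, so the identity still holds there.

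\textbf{Bookkeeping to watch.} The main point of care is the choice of roles in the Vandermonde identity and checking it on edge cases. When $b<l$ or $a<r-l$ the Gaussian coefficients vanish by definition, so no extra terms appear. The exponent $l(l+a-r)$ may be negative when $a<r-l$, but then the corresponding coefficient is zero, so the term vanishes regardless. The identity also holds when $r>a+b$, where both sides are $0$. The case $r=0$ reduces to counting: the left side is $\begin{bmatrix}n\\ t\end{bmatrix}_q$ and the right side is $\begin{bmatrix}n\\ n-t\end{bmatrix}_q$, which agree by the reflection property.
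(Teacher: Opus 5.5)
Your proposal is correct and follows the paper's proof essentially step for step: Lemma~\ref{lemmalienA} for $\Cc^\perp$, then Lemma~\ref{liendimRavagna} to get $\dim\Cc^\perp(U)=n-k-t+\dim\Cc(U^\perp)$, then the $q^m$-Vandermonde identity with $n-k-t$ in the role that produces the exponent $ml(l+n-k-t-r)$, then the bijection $U\mapsto U^\perp$, and finally Lemma~\ref{lemmalienA} again for $\Cc$; the only difference is that the paper re-indexes by $V=U^\perp$ before expanding rather than after, which is immaterial. Your edge-case checks ($l>k$, vanishing Gaussian coefficients, $r=0$) are sound, and the same vanishing argument also covers applying Lemma~\ref{lemmalienA} to $\Cc^\perp$ when $r>n-k$, a case the paper likewise leaves implicit.
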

\begin{proof}
    Apply consecutively Lemmas \ref{lemmalienA} and \ref{liendimRavagna} to $\Cc^\perp$ :
     \begin{align*}
        \sum_{w=0}^n\begin{bmatrix}
        n-w\\t
    \end{bmatrix}_qA^r_w(\Cc^\perp) & = \sum_{\substack{U\subset \KK^n \\ \dim_\KK U = t}}\begin{bmatrix}
        \dim_\LL \Cc^\perp(U) \\ r
    \end{bmatrix}_{q^m} \\ & = \sum_{\substack{U\subset \KK^n \\ \dim_\KK U = t}}\begin{bmatrix}
        \dim_\LL \Cc^\perp - \dim_\KK U + \dim_\LL \Cc(U^\perp) \\ r
    \end{bmatrix}_{q^m}
    \end{align*}
    \[   \]
    Since the map $U\longmapsto U^\perp$ is a bijection, we can rewrite the previous equality : 
    $$\sum_{w=0}^n\begin{bmatrix}
        n-w\\t
    \end{bmatrix}_qA^r_w(\Cc^\perp) = \sum_{\substack{V\subset \KK^n \\ \dim_\KK V = n-t}}\begin{bmatrix}
        \dim_\LL \Cc^\perp - \dim_\KK V^\perp + \dim_\KK \Cc(V) \\ r
    \end{bmatrix}_{q^m}$$
    Then, using Lemma \ref{lemmaq-Coeff} $(5)$ with $a=\dim_\KK \Cc(V)$, $b=\dim_\LL \Cc^\perp - \dim_\KK V^\perp$ and $c=r$, we get
   
    \[ \sum_{w=0}^n \begin{bmatrix}
        n-w\\t
    \end{bmatrix}_qA^r_w(\Cc^\perp) = \sum_{\substack{V\subset \KK^n \\ \dim_\KK V = n-t}}\sum_{l=0}^rq^{ml(l+n-k-t-r)}\begin{bmatrix}
        \dim_\LL\Cc(V)\\ l
    \end{bmatrix}_{q^m}\begin{bmatrix}
        n-k-t \\r-l
    \end{bmatrix}_{q^m}\]
    Finally, apply again Lemma \ref{lemmalienA} to get the formula.
\end{proof}

\begin{example}\label{ex3.13}
    Keeping the code $\Cc$ defined in the previous Example \ref{Exemple2}, we can compute the GRW distribution of the dual code $\Cc^\perp$ whose parameters are $[3,2]$. Using the equations given by the previous Theorem \ref{MacWIdentDistribution}, we obtain \[ (A_0^1(\Cc^\perp),A_1^1(\Cc^\perp),A_2^1(\Cc^\perp),A_3^1(\Cc^\perp),A_2^2(\Cc^\perp),A_2^3(\Cc^\perp))= (0,1,4,12,0,1).\]
    For instance, let us compute the distribution for $r=1$. We get three different equations : 
    \begin{itemize}
        \item For $t=0$ : $A^1_0(\Cc^\perp) + A^1_1(\Cc^\perp) + A^1_2(\Cc^\perp) + A^1_3(\Cc^\perp) = \begin{bmatrix}
            2\\1
        \end{bmatrix}_{2^4} = 17$
        \item For $t=1$ : $7A^1_0(\Cc^\perp) + 3A_1^1(\Cc^\perp) + A_2^1(\Cc^\perp) = \begin{bmatrix}
            3\\2
        \end{bmatrix}_{2^4}= 7$
        \item For $t=2$ : $7A_0^1(\Cc^\perp)+A^1_1(\Cc^\perp) = 1$
    \end{itemize}
    Since $A_0^1(\Cc^\perp)=0$, we easily solve this system and get the result.
\end{example}

\section{Formula for the Generalized Rank Weight enumerator}\label{Section4}

Let us reconsider Lemma \ref{lemmalienA} and define some notation for later results.

\begin{definition}
Let $r$ be a nonnegative integer. For a subspace $\Cc$ of $\LL^n$ and a subspace $U$ of $\KK^n$, we set 
 \[ B_U^r(\Cc) \defegal \# \{ \Dc \subset \Cc(U) \ ; \ \dim_\LL \Dc =r \} \] \[ B_t^r(\Cc) \defegal \displaystyle\sum_{\substack{U\subset \KK^n \\ \dim U = t}} B_U^r(\Cc) \].
\end{definition}

\noindent By Lemma \ref{lemmaq-Coeff} (1), we easily get that $B_U^r(\Cc)= \begin{bmatrix}
    \dim_\LL \Cc(U) \\ r
\end{bmatrix}_{q^m}$.

\begin{proposition}\label{FormuleBrt}
    We have $B_t^r(\Cc) = \left\{ \begin{array}{cl}
       0  & \text{if } t >n-M_r(\Cc)   \\
       \begin{bmatrix}
           n\\ t
       \end{bmatrix}_q\begin{bmatrix}
           \dim_\LL \Cc - t \\ r
       \end{bmatrix}_{q^m}  & \text{if } t < M_1(\Cc^\perp) 
    \end{array}\right.$
\end{proposition}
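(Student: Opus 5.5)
The plan is to handle the two cases separately, starting from $B_t^r(\Cc)=\sum_{\dim U=t} B_U^r(\Cc)$ and the identity $B_U^r(\Cc)=\begin{bmatrix}\dim_\LL \Cc(U)\\ r\end{bmatrix}_{q^m}$ noted just before the statement.

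\emph{First case:} $t>n-M_r(\Cc)$. Here I will show that every term $B_U^r(\Cc)$ vanishes, i.e.\ that no $r$-dimensional $\Dc$ lies inside $\Cc(U)$ when $\dim_\KK U=t$.
\begin{itemize}
\item Suppose such a $\Dc$ exists. By Lemma~\ref{caracterisatinC(U)}(3), $\Dc\subset\Cc(U)$ is equivalent to $\text{Rsupp}(\Dc)\subset U^\perp$.
\item This gives $\text{wt}_R(\Dc)\le \dim U^\perp=n-t<M_r(\Cc)$.
\item This contradicts the definition of $M_r(\Cc)$ as the minimum of $\text{wt}_R$ over $r$-dimensional subspaces of $\Cc$ (Definition~\ref{DefJurriusPelik}).
\end{itemize}
So each term is $0$, and therefore $B_t^r(\Cc)=0$. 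Equivalently, $\dim_\LL\Cc(U)<r$ and the Gaussian binomial vanishes by its definition. This mirrors the first half of Proposition~\ref{DimC(U)}, with $M_1$ replaced by $M_r$.

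\emph{Second case:} $t<M_1(\Cc^\perp)$.
\begin{itemize}
\item Proposition~\ref{DimC(U)} gives $\dim_\LL\Cc(U)=\dim_\LL\Cc-t$ for every $U$ with $\dim_\KK U=t$.
\item Hence every summand equals $\begin{bmatrix}\dim_\LL\Cc-t\\ r\end{bmatrix}_{q^m}$, independently of $U$.
\item The number of $t$-dimensional subspaces $U\subset\FF_q^n$ is $\begin{bmatrix}n\\t\end{bmatrix}_q$ by Lemma~\ref{lemmaq-Coeff}(1).
\end{itemize}
Multiplying these gives the stated formula.

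Finally, I will note compatibility, in the spirit of Remark~\ref{Rmq_2cas}. If both conditions hold, then $n-M_r(\Cc)<t$. The Singleton bound $M_r(\Cc)\le n-k+r$ then gives $t>k-r$, so $k-t<r$ and the Gaussian binomial in the second formula is $0$. The two formulas therefore agree.

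I expect no step to be a real obstacle. The only point needing care is the first case: the bound must use $M_r$ rather than $M_1$, so one cannot simply quote Proposition~\ref{DimC(U)}, and the argument has to go through $\text{Rsupp}(\Dc)$ via Lemma~\ref{caracterisatinC(U)}(3).
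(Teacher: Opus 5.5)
Your proof is correct and follows the paper's argument essentially step for step. In the first case you use Lemma~\ref{caracterisatinC(U)}(3) to get $\text{wt}_R(\Dc)\le n-t<M_r(\Cc)$, and in the second case you combine Proposition~\ref{DimC(U)} with the count of $\begin{bmatrix} n\\ t\end{bmatrix}_q$ subspaces; your compatibility check matches the remark the paper places right after the statement.
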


\begin{proof}
    The second case is a direct corollary of Proposition \ref{DimC(U)} since there are exactly $\begin{bmatrix}
           n\\t
       \end{bmatrix}_q$ subspaces of $\KK^n$ with fixed dimension $t$.

    Then, assume that $t > n -M_r(\Cc)$ and take an $r$-dimensional subspace $\Dc \subset \Cc(U)$. By Lemma \ref{caracterisatinC(U)}, this condition becomes Rsupp$(\Dc) \subset U^\perp$. Then, taking the dimension on both sides,  wt$_R(\Dc) \leq n-t$. Using the hypothesis on $t$, we have wt$_R(\Dc) \leq n-t < M_r(\Cc)$, which, by definition of $M_r(\Cc)$, leads to the non-existence of such a subspace $\Dc$. Then $B_U^r =0$ and the result follows.
\end{proof}

\begin{remark}
    Generalized Singleton bounds for $\Cc$ and $\Cc^\perp$ give $M_r(\Cc^\perp) \leq k+r$ and $k-r\leq n-M_r(\Cc)$, where $k\defegal \dim\Cc$. Using these inequalities, we first have, if $t> n-M_r(\Cc)$, then $t\geq k-r+1$ that is $r-1\geq k-t$. And secondly, if $t< M_1(\Cc^\perp)$, then $t<M_1(\Cc^\perp)<M_r(\Cc^\perp) \leq k+r$, and so $-r+1\leq k-t$. One may say that those two conditions might overlap for $k-t \in \llbracket -r+1 ; r-1\rrbracket$. But for such values, the two formulae coincide by definition of the Gaussian binomial coefficient, and so the proposition \ref{FormuleBrt} makes sense. 
\end{remark}

\begin{proposition}\label{LienB->A}
    We have the following formula : $$B_t^r(\Cc) = \sum_{w=0}^n\begin{bmatrix}
        n-w\\ t
    \end{bmatrix}_qA_w^r(\Cc).$$
\end{proposition}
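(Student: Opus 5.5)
This is essentially a restatement of Lemma \ref{lemmalienA} in the new notation, so I would prove it by unfolding the definitions and, for completeness, recalling the double-counting argument behind that lemma. By definition, $B_t^r(\Cc)=\sum_{U,\ \dim_\KK U=t} B_U^r(\Cc)$. Moreover, $B_U^r(\Cc)$ counts the $r$-dimensional $\LL$-subspaces of $\Cc(U)$. This makes sense because $\Cc(U)$ is an $\LL$-vector space by Lemma \ref{caracterisatinC(U)} (2), and so $B_U^r(\Cc)=\begin{bmatrix}\dim_\LL\Cc(U)\\ r\end{bmatrix}_{q^m}$ by Lemma \ref{lemmaq-Coeff} (1). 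Hence $B_t^r(\Cc)$ equals the left-hand side of Lemma \ref{lemmalienA}, and the formula follows at once.

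To make the argument self-contained, I would count the set
\[X=\{(\Dc,U) \ ; \ U\subset\KK^n,\ \dim_\KK U=t,\ \Dc\subset\Cc(U),\ \dim_\LL\Dc=r\}\]
in two ways. Fixing $U$ first and counting the admissible $\Dc$ gives exactly $B_t^r(\Cc)$.

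For the other count, fix $\Dc$ first, with $\dim_\LL\Dc=r$ and $\text{wt}_R(\Dc)=w$. By Lemma \ref{caracterisatinC(U)} (3), the condition $\Dc\subset\Cc(U)$ is equivalent to $\text{Rsupp}(\Dc)\subset U^\perp$, which is in turn equivalent to $U\subset \text{Rsupp}(\Dc)^\perp$. The space $\text{Rsupp}(\Dc)^\perp$ has $\KK$-dimension $n-w$, so there are $\begin{bmatrix}n-w\\ t\end{bmatrix}_q$ admissible choices of $U$. Summing over $w$, with $A_w^r(\Cc)$ subspaces $\Dc$ for each $w$, gives the right-hand side.

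The only point needing care is the equivalence $\text{Rsupp}(\Dc)\subset U^\perp \iff U\subset\text{Rsupp}(\Dc)^\perp$ together with the dimension count $\dim_\KK \text{Rsupp}(\Dc)^\perp = n-w$. Both hold because the standard bilinear form on $\KK^n$ is nondegenerate, so $(\cdot)^\perp$ is an inclusion-reversing involution on subspaces of $\KK^n$. I do not expect any genuine obstacle here.
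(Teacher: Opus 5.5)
Your proposal is correct and follows the paper's approach: the paper proves this proposition simply by noting that $B_t^r(\Cc)$ is the left-hand side of Lemma \ref{lemmalienA} rewritten using $B_U^r(\Cc)=\begin{bmatrix}\dim_\LL\Cc(U)\\ r\end{bmatrix}_{q^m}$. Your double count over pairs $(\Dc,U)$ is the paper's proof of that lemma, with the duality step $\text{Rsupp}(\Dc)\subset U^\perp \iff U\subset \text{Rsupp}(\Dc)^\perp$ written out explicitly.
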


\begin{proof}
    This is just a rewritting of Lemma \ref{lemmalienA} with the new notation.
\end{proof}

The next lemma is a combinatorial result. A deeper understanding of this result can be found in the study of posets and Möbius inversion formulae (see \cite[Chapter 3, Section 3.10]{CombinatoricStanley}) and also in the use of recursion and inversion with counting functions (see \cite[Chapter III, Section 2.]{CombinatorialAigner}). Here is a simpler proof. 

\begin{lemma}\label{lemmaInversionGauss}
Let $E$ be a vector space of dimension $n+1$, with $n$ an integer. Let $\textbf{a}=(a_0,a_1,...,a_n)$ and $\textbf{b}=(b_0,b_1,...,b_n)$ two vectors of $E$. We have a Gaussian inversion type formula : 
 $$\forall j \in \llbracket 1,n \rrbracket, \ b_j = \sum_{i=0}^n\begin{bmatrix}
    i\\j
\end{bmatrix}_qa_i \iff \forall j \in \llbracket 1,n \rrbracket, \ a_j = \sum_{i=j}^n(-1)^{i-j}q^{\frac{(i-j)(i-j-1)}{2}}\begin{bmatrix}
    i\\j
\end{bmatrix}_qb_i$$
\end{lemma}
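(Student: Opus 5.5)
The plan is to reformulate the statement as a matrix identity. I read the index range as $j\in\llbracket 0,n\rrbracket$, since the sums start at $i=0$; the case $j=0$ is needed for both systems to be square.

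Let $P=(P_{ji})_{0\le j,i\le n}$ with $P_{ji}=\begin{bmatrix} i\\j\end{bmatrix}_q$, and $Q=(Q_{ji})$ with $Q_{ji}=(-1)^{i-j}q^{\frac{(i-j)(i-j-1)}{2}}\begin{bmatrix} i\\j\end{bmatrix}_q$. By the definition of the Gaussian binomial coefficient, $P_{ji}=0$ for $i<j$, so the sum $\sum_{i=0}^n$ may be replaced by $\sum_{i=j}^n$. With this, the left-hand condition reads $\textbf{b}=P\textbf{a}$ and the right-hand condition reads $\textbf{a}=Q\textbf{b}$. Both matrices are upper triangular with $1$'s on the diagonal, so they are invertible. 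It therefore suffices to prove the single identity $QP=I_{n+1}$. This gives $Q=P^{-1}$, hence also $PQ=I$, and both implications of the equivalence follow.

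For the entry $(j,i)$ of $QP$, only indices $j\le k\le i$ contribute, so
\[ (QP)_{ji}=\sum_{k=j}^{i}(-1)^{k-j}q^{\frac{(k-j)(k-j-1)}{2}}\begin{bmatrix} k\\j\end{bmatrix}_q\begin{bmatrix} i\\k\end{bmatrix}_q .\]
I will apply Lemma~\ref{lemmaq-Coeff}~(3) with $a=i$, $b=k$, $c=j$, and then the reflection property (2), to obtain
\[ \begin{bmatrix} i\\k\end{bmatrix}_q\begin{bmatrix} k\\j\end{bmatrix}_q=\begin{bmatrix} i\\j\end{bmatrix}_q\begin{bmatrix} i-j\\k-j\end{bmatrix}_q .\]
This factors out $\begin{bmatrix} i\\j\end{bmatrix}_q$. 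Setting $N=i-j$ and $s=k-j$, we get
\[ (QP)_{ji}=\begin{bmatrix} i\\j\end{bmatrix}_q\sum_{s=0}^{N}(-1)^s q^{\frac{s(s-1)}{2}}\begin{bmatrix} N\\s\end{bmatrix}_q .\]

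The inner sum is exactly the right-hand side of the $q$-binomial theorem, Lemma~\ref{lemmaq-Coeff}~(4), with $x=1$ and $y=-1$. It therefore equals $\prod_{k=0}^{N-1}(1-q^k)$.
\begin{itemize}
\item If $N\ge 1$, this product contains the factor $1-q^0=0$, so the entry vanishes.
\item If $N=0$, the product is empty and equals $1$, and then $\begin{bmatrix} i\\i\end{bmatrix}_q=1$.
\end{itemize}
Hence $QP=I$.

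The only delicate points are bookkeeping. One is checking that the indices in Lemma~\ref{lemmaq-Coeff}~(3) line up correctly after reflection. The other is the boundary conventions of the Gaussian binomial coefficient ($k>n$ gives $0$, and $k=0$ or $n=0$ gives $1$), which make the truncation of the sums legitimate. I expect the main step to be recognizing the alternating sum as the $q$-binomial theorem at $y=-x$; once that is seen, the rest is routine.
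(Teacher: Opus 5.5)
Your proof is correct and matches the paper's argument essentially step for step. The paper also shows the two triangular matrices are mutually inverse: it factors $\begin{bmatrix} i\\k\end{bmatrix}_q\begin{bmatrix} k\\j\end{bmatrix}_q=\begin{bmatrix} i\\j\end{bmatrix}_q\begin{bmatrix} i-j\\k-j\end{bmatrix}_q$ via Lemma~\ref{lemmaq-Coeff}~(3), then evaluates the alternating sum as $\prod_{p=0}^{i-j-1}(1-q^p)$ with the $q$-binomial theorem~(4). Your reading $j\in\llbracket 0,n\rrbracket$ is the one needed in Proposition~\ref{AfonctionB}, where $w=n$ gives $j=0$; the literal range $\llbracket 1,n\rrbracket$ would also be fine, since $a_0,b_0$ then drop out and the trailing blocks of your triangular matrices are still mutually inverse.
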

\begin{proof}
    Let us show that the two following matrices $\left( \begin{bmatrix}
        i\\j
    \end{bmatrix}_q\right)_{i,j=0,...,n}$ and 
    
    $\textstyle \left( (-1)^{i-j}q^{\frac{(i-j)(i-j-1)}{2}}\begin{bmatrix}
        i\\j
    \end{bmatrix}_q\right)_{i,j=0,...,n}$ are mutually inverse. 
    
    We have : 
    \begin{align*}
        \sum_{k=0}^n\begin{bmatrix}
            i\\k
        \end{bmatrix}_q (-1)^{k-j}q^{\frac{(k-j)(k-j-1)}{2}}\begin{bmatrix}
            k\\j
        \end{bmatrix}_q 
        = & \sum_{k=j}^i(-1)^{k-j}q^{\frac{(k-j)(k-j-1)}{2}}\begin{bmatrix}
            i\\j
        \end{bmatrix}_q\begin{bmatrix}
            i-j\\k-j
        \end{bmatrix}_q \\    = & \begin{bmatrix}
            i\\j
        \end{bmatrix}_q\prod_{p=0}^{i-j-1}(1-q^p)
    \end{align*}
    Indeed, the two equalities are obtained using item $(3)$ of Lemma \ref{lemmaq-Coeff} for the first one and then item $(4)$ for the second one.
If $i=j$, the computations give $\begin{bmatrix}
    i\\i
\end{bmatrix}\displaystyle\prod_{p=0}^{-1}(1-q^p) =1$, else the product is zero. So the matrix product is equal to the identity matrix.
\end{proof}

\begin{proposition}\label{AfonctionB}
    The following holds : $$A^r_w(\Cc) = \sum_{t=n-w}^n(-1)^{t+w-n}q^{\frac{(t+w-n)(t+w-n-1)}{2}}\begin{bmatrix}
        t \\ n-w
    \end{bmatrix}_qB_t^r(\Cc)$$
\end{proposition}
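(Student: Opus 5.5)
The plan is to obtain the formula by inverting Proposition \ref{LienB->A} with the Gaussian inversion of Lemma \ref{lemmaInversionGauss}, after a change of index that turns the relation into the triangular form required by that lemma. Fix $r$ and work with the vectors indexed by $0,\dots,n$. Set $a_i \defegal A^r_{n-i}(\Cc)$ and $b_j \defegal B^r_j(\Cc)$ for $i,j\in\llbracket 0,n\rrbracket$. Substituting $i=n-w$ in Proposition \ref{LienB->A} gives
\[
b_j=\sum_{i=0}^n\begin{bmatrix} i\\ j\end{bmatrix}_q a_i \qquad\text{for all } j\in\llbracket 0,n\rrbracket .
\]
This is exactly the left-hand side of Lemma \ref{lemmaInversionGauss}.

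Next I apply the lemma, which yields
\[
a_j=\sum_{i=j}^n(-1)^{i-j}q^{\frac{(i-j)(i-j-1)}{2}}\begin{bmatrix} i\\ j\end{bmatrix}_q b_i .
\]
Taking $j=n-w$ and renaming the summation index $i$ as $t$ gives $t$ running from $n-w$ to $n$. The exponent becomes $t-(n-w)=t+w-n$. The binomial coefficient becomes $\begin{bmatrix} t\\ n-w\end{bmatrix}_q$, and $b_t=B^r_t(\Cc)$. This is precisely the stated formula.

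The one point needing care is the index range. Lemma \ref{lemmaInversionGauss} is written for $j\in\llbracket 1,n\rrbracket$, but here we need $j=n-w=0$, that is $w=n$, as well as the equation for $b_0$. I would handle this by noting that the proof of the lemma shows the two $(n+1)\times(n+1)$ matrices indexed by $0,\dots,n$ are mutually inverse. The equivalence therefore holds for all $j\in\llbracket 0,n\rrbracket$ at once, and I apply it in that form.

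Beyond this range check and the bookkeeping of the reindexing $i\leftrightarrow n-w$, I expect no real obstacle.
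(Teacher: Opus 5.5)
Your proposal is correct and is essentially the paper's proof: the paper applies Lemma \ref{lemmaInversionGauss} with $\textbf{a}=(A^r_n,\dots,A^r_0)$ and $\textbf{b}=(B^r_0(\Cc),\dots,B^r_n(\Cc))$, which is your reindexing $a_i=A^r_{n-i}$, $b_j=B^r_j(\Cc)$. Your extra remark goes beyond the paper, which passes over this silently: the case $w=n$ needs the lemma for all $j\in\llbracket 0,n\rrbracket$, and this follows from the full $(n+1)\times(n+1)$ matrix inversion carried out in the lemma's proof.
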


\begin{proof}
    Apply Lemma \ref{lemmaInversionGauss} with $\textbf{a} = (A_n^r,A_{n-1}^r,...,A_1^r,A_0^r)$ and 
    
    $\textbf{b} = (B_0^r(\Cc),B_1^r(\Cc),...,B_n^r(\Cc))$.
\end{proof}

We can now state our main theorem which gives a formula for the generalized rank weight enumerator : 

\begin{theorem}\label{THMEnumGenRank}
    The generalized rank weight enumerators for an $[n,k]$-code $\Cc$ defined over $\FF_{q^m}$ are given by $$W_\Cc^r(X,Y)  = \sum_{t=0}^nB_t^r(\Cc)X^{n-t}\prod_{p=0}^{t-1}(Y-q^pX),$$ for $r=0, \dots, k$.
\end{theorem}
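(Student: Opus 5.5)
We substitute the expression of $A^r_w(\Cc)$ given by Proposition \ref{AfonctionB} into the definition $W^r_\Cc(X,Y)=\sum_{w=0}^n A^r_w(\Cc)X^wY^{n-w}$, exchange the two finite sums, and recognise the inner sum as a $q$-binomial expansion via Lemma \ref{lemmaq-Coeff} (4).

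\textbf{Substitution and exchange of sums.} Inserting Proposition \ref{AfonctionB} gives
\[
W^r_\Cc(X,Y)=\sum_{w=0}^n\sum_{t=n-w}^n(-1)^{t+w-n}q^{\frac{(t+w-n)(t+w-n-1)}{2}}\begin{bmatrix} t\\ n-w\end{bmatrix}_qB^r_t(\Cc)X^wY^{n-w}.
\]
The index set is $\{(w,t) : 0\le t\le n,\ n-t\le w\le n\}$. Reindexing by $s\defegal t+w-n\in\llbracket 0,t\rrbracket$, so that $w=n-t+s$ and $n-w=t-s$, yields
\[
W^r_\Cc(X,Y)=\sum_{t=0}^nB^r_t(\Cc)\,X^{n-t}\sum_{s=0}^t(-1)^sq^{\frac{s(s-1)}{2}}\begin{bmatrix} t\\ t-s\end{bmatrix}_qX^{s}Y^{t-s}.
\]

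\textbf{Identifying the inner sum.} By the reflection property (Lemma \ref{lemmaq-Coeff} (2)), $\begin{bmatrix} t\\ t-s\end{bmatrix}_q=\begin{bmatrix} t\\ s\end{bmatrix}_q$. The inner sum is therefore $\sum_{s=0}^t q^{\frac{s(s-1)}{2}}\begin{bmatrix} t\\ s\end{bmatrix}_q Y^{t-s}(-X)^s$. Applying Lemma \ref{lemmaq-Coeff} (4) with $x=Y$ and $y=-X$ turns it into $\prod_{p=0}^{t-1}(Y-q^pX)$. The lemma is stated for complex numbers, but it is a polynomial identity and so holds for indeterminates. This gives the announced formula for every $r$.

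\textbf{Points to check.} The main point to verify is that Proposition \ref{AfonctionB} is correctly indexed, since it rests on applying Lemma \ref{lemmaInversionGauss} with the reversed vector $(A^r_n,\dots,A^r_0)$. I would check this index by index. Proposition \ref{LienB->A} reads $B^r_t=\sum_i\begin{bmatrix} i\\ t\end{bmatrix}_q a_i$ with $a_i=A^r_{n-i}$. Its inversion gives $A^r_{n-j}=\sum_{i\ge j}(-1)^{i-j}q^{\binom{i-j}{2}}\begin{bmatrix} i\\ j\end{bmatrix}_qB^r_i$, which is Proposition \ref{AfonctionB} with $j=n-w$.

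There is a boundary issue: Lemma \ref{lemmaInversionGauss} is stated for $j\in\llbracket 1,n\rrbracket$, but the matrix-inverse proof covers $j=0$ as well, so the formula also holds for $w=n$.

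Finally, the case $r=0$ is consistent. There $B^0_t=\begin{bmatrix} n\\ t\end{bmatrix}_q$, and the formula should collapse to $Y^n$. This can be double-checked with the same $q$-binomial theorem.
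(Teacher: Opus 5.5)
Your proof is correct and is essentially the paper's own argument: substitute the inversion formula for $A^r_w(\Cc)$ from Proposition~\ref{AfonctionB}, exchange the sums, reindex by $t+w-n$, and conclude with the reflection property and the $q$-binomial theorem of Lemma~\ref{lemmaq-Coeff}. Your additional checks are accurate and go slightly beyond what the paper spells out: you verify the reversed-vector indexing index by index, you note that the case $j=0$ (i.e.\ $w=n$) is needed and is covered by the matrix-inverse proof even though Lemma~\ref{lemmaInversionGauss} is stated only for $j\geq 1$, and you give the $r=0$ sanity check.
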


\begin{proof} We have the following computations : 
\begin{eqnarray*} 
         W_\Cc^r(X,Y)  & = & \displaystyle \sum_{w=0}^n A^r_w(\Cc)X^wY^{n-w}  \notag \\
         &=& \displaystyle \sum_{w=0}^n\left( \sum_{t=n-w}^n(-1)^{t+w-n}q^{\frac{(t+w-n)(t+w-n-1)}{2}}\begin{bmatrix}
        t \\ n-w
    \end{bmatrix}_qB_t^r(\Cc) \right)X^wY^{n-w} \\
        &=& \displaystyle \sum_{t=0}^nB^r_t(\Cc)\left( \sum_{w=n-t}^n(-1)^{t+w-n}q^{\frac{(t+w-n)(t+w-n-1)}{2}}\begin{bmatrix}
        t \\ n-w
    \end{bmatrix}_qX^wY^{n-w}\right) \\
        &=& \displaystyle \sum_{t=0}^nB^r_t(\Cc)\left( \sum_{p=0}^t(-1)^pq^{\frac{p(p-1)}{2}}\begin{bmatrix}
        t \\ t-p
    \end{bmatrix}_qX^{p-t+n}Y^{t-p}\right) \\
       & =& \displaystyle \sum_{t=0}^nB^r_t(\Cc)\left( \sum_{p=0}^t(-1)^pq^{\frac{p(p-1)}{2}}\begin{bmatrix}
        t \\ p
    \end{bmatrix}_qX^{p}Y^{t-p}\right)X^{n-t} \\
    &=& \displaystyle \sum_{t=0}^nB_t^r(\Cc)\prod_{p=0}^{t-1}\left(Y-q^pX\right)X^{n-t}
\end{eqnarray*}
    The fourth equality comes from the change of index $p =t+w-n$, the fifth one is obtained by Lemma \ref{lemmaq-Coeff} $(2)$ and the last equality is derived from Lemma \ref{lemmaq-Coeff} $(4)$.
\end{proof}

\begin{remark}
    By definition of our notation, we expect that $W^R_\Cc(Y,X) = W^0_\Cc(X,Y) + (q^m-1)W^1_\Cc(X,Y)$, where $W^R_\Cc(Y,X)$ is the weight enumerator for the rank metric as defined in \cite[Definition 12]{Shiromoto} for instance.
    Using our Theorem \ref{THMEnumGenRank}, one may compute that $$W^{0}_\Cc(X,Y) + (q^m-1)W^{1}_\Cc(X,Y) = \sum_{t=0}^n\begin{bmatrix}
        n\\t
    \end{bmatrix}_qq^{m(k-t)}\prod_{p=0}^{t-1}(Y-q^pX)X^{n-t}.$$ By looking at the antepenultimate line in proof of Theorem $14$ in \cite{Shiromoto}, we indeed find that $W^R_\Cc(X,Y) = W^0_\Cc(X,Y) + (q^m-1)W^1_\Cc(X,Y)$.
\end{remark}

\begin{example}
    Using the code $\Cc$ given in Example \ref{Exemple2}, we can compute its GRW enumerator for $r=1$. Using computations done in this Example, we can get that $B_0^1(\Cc) = B_1^1(\Cc)=1$, else $B^1_t(\Cc)=0$. This leads to the following GRW enumerator: 
    \[ W^1_\Cc(X,Y) = B_0^1(\Cc)X^3 + B_1^1(\Cc)X^2(Y-X) = X^2Y\]
    The coefficients of this polynomial are consistent with the GRW distribution of $\Cc$ found in Example \ref{Exemple2}.
\end{example}

\section{On the Generalized Rank Weight distribution of MRD codes}\label{Section5}

In this section, we want to characterize the generalized rank weight distribution of an MRD code. We use the notation given in the previous section.

\begin{proposition}\label{dependanceBrtParamMRD}
    Let $\Cc$ be an $[n,k]$ MRD code over $\FF_{q^m}$. We then have \[ B_t^r(\Cc) = \left\{ \begin{array}{cl}
       0  & \text{if } t > k   \\
       \begin{bmatrix}
           n\\ t
       \end{bmatrix}_q\begin{bmatrix}
           k - t \\ r
       \end{bmatrix}_{q^m}  & \text{if } t \leq k
    \end{array}\right.\]
    In particular, $B_t^r(\Cc)$ is completely determined by the parameters of the code, namely $n,m$ and $k$.
\end{proposition}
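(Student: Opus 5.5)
The plan is to apply Proposition \ref{FormuleBrt} and use the MRD hypothesis to make its two cases cover all values of $t$. For an MRD code we have $M_1(\Cc)=n-k+1$. The first step is to transfer this to the dual: $\Cc^\perp$ is also MRD, so $M_1(\Cc^\perp)=k+1$. I would get this from Proposition \ref{PropPoidsgeneralites}.

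First, since the hierarchy is strictly increasing and bounded by the Singleton bound $M_r(\Cc)\le n-k+r$, the equality $M_1(\Cc)=n-k+1$ forces $M_r(\Cc)=n-k+r$ for all $1\le r\le k$. Hence $\{M_r(\Cc)\}=\{n-k+1,\dots,n\}$. Item $(3)$ then gives $\{n+1-M_r(\Cc^\perp)\}=\{1,\dots,n-k\}$, so $\{M_r(\Cc^\perp)\}=\{k+1,\dots,n\}$. In particular $M_1(\Cc^\perp)=k+1$.

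Next, split into cases on $t$, assuming $r\ge 1$.
\begin{itemize}
\item If $t\le k$, then $t<k+1=M_1(\Cc^\perp)$. The second case of Proposition \ref{FormuleBrt} gives $B_t^r(\Cc)=\begin{bmatrix} n\\ t\end{bmatrix}_q\begin{bmatrix} k-t\\ r\end{bmatrix}_{q^m}$.
\item If $t>k$, then $t>n-M_r(\Cc)=k-r$ for every $r\ge 1$. The first case of Proposition \ref{FormuleBrt} gives $B_t^r(\Cc)=0$.
\end{itemize}
It is cleaner to bypass Proposition \ref{FormuleBrt} in the second case: for $t>k$ and $r\ge1$, the second case already handles all $t\le k$, so one only needs that $\Cc(U)=0$ when $\dim U>k=n-M_1(\Cc)$. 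This is exactly the first case of Proposition \ref{DimC(U)}. Then $B_U^r=\begin{bmatrix}0\\ r\end{bmatrix}_{q^m}=0$ for $r\ge1$.

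The edge case $r=0$ with $t>k$ needs care. There $B_U^0=1$, so the formula as stated would fail. I would either read the statement for $1\le r\le k$, or note that the Gaussian binomial convention with negative $k-t$ is being used. I expect this bookkeeping, rather than anything substantive, to be the only delicate point.

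Finally, observe that the resulting expression involves only $n,k,m,q,t,r$. This gives the "in particular" claim.
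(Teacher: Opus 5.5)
Your proposal is correct and is essentially the paper's proof: you substitute $M_r(\Cc)=n-k+r$ and $M_1(\Cc^\perp)=k+1$ into Proposition~\ref{FormuleBrt}. The only differences are minor: you obtain $M_1(\Cc^\perp)=k+1$ from the duality statement in Proposition~\ref{PropPoidsgeneralites}(3) rather than citing Ravagnani's result that the dual of an MRD code is MRD, and you split directly into $t\le k$ and $t>k$ instead of reconciling the overlap $t\in\llbracket k-r+1,k\rrbracket$.

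Your caveat about $r=0$ is justified: there $B_t^0(\Cc)=\begin{bmatrix} n\\ t\end{bmatrix}_q\neq 0$ for $k<t\le n$, so the statement implicitly needs $r\ge 1$, as the paper's proof via $M_r(\Cc)$ also does. There is also a harmless slip in your bypass argument: $n-M_1(\Cc)=k-1$, not $k$.
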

\begin{proof}
    Since $\Cc$ is MRD, $M_1(\Cc) =n-k +1$. Using that the sequence of generalized rank weights $(M_r(\Cc))_r$ is increasing, we get that $M_r(\Cc) = n-k +r$. Moreover, by \cite[Corollary $41$]{Ravagnani}, the dual code of an MRD code is also MRD. Then, in our case, we have $M_1(\Cc^\perp) = k +1$.

    \noindent We can now substitute these values in Proposition \ref{FormuleBrt} and obtain : $$B_t^r(\Cc) = \left\{ \begin{array}{cl}
       0  & \text{if } t >k -r   \\
       \begin{bmatrix}
           n\\ t
       \end{bmatrix}_q\begin{bmatrix}
           k - t \\ r
       \end{bmatrix}_{q^m}  & \text{if } t \leq k 
    \end{array}\right.$$
    For $t\in \llbracket k-r+1,k\rrbracket$, both quantities in the previous formula are equal. We can finally rewrite $\displaystyle B_t^r(\Cc) = \left\{ \begin{array}{cl}
       0  & \text{if } t >k   \\
       \begin{bmatrix}
           n\\ t
       \end{bmatrix}_q\begin{bmatrix}
           k - t \\ r
       \end{bmatrix}_{q^m}  & \text{if } t \leq k 
    \end{array}\right.$
\end{proof}

Now, we can state our main result.

\begin{theorem}\label{ThmMRDindep}
    Let $\Cc$ be a non-zero $[n,k]$ MRD code of $\FF_{q^m}^n$. The generalized rank weight distribution of $\Cc$ depends only on $n,m$ and $k$. More precisely, we have $$A^r_w(\Cc) = \sum_{t=n-w}^k(-1)^{t+w-n}q^{\frac{(t+w-n)(t+w-n-1)}{2}}\begin{bmatrix}
        t\\n-w
    \end{bmatrix}_q\begin{bmatrix}
        n\\t
    \end{bmatrix}_q\begin{bmatrix}
        k -t\\r
    \end{bmatrix}_{q^m}$$
\end{theorem}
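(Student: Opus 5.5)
The plan is to combine the inversion formula of Proposition \ref{AfonctionB} with the explicit values of $B^r_t(\Cc)$ for MRD codes from Proposition \ref{dependanceBrtParamMRD}. Proposition \ref{AfonctionB} holds for an arbitrary $[n,k]$ code and gives
\[ A^r_w(\Cc) = \sum_{t=n-w}^n(-1)^{t+w-n}q^{\frac{(t+w-n)(t+w-n-1)}{2}}\begin{bmatrix} t \\ n-w \end{bmatrix}_qB_t^r(\Cc). \]
So the whole argument consists of substituting the MRD values of $B^r_t(\Cc)$ into this sum and then simplifying its range.

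First, I recall that for a non-zero MRD code we have $M_r(\Cc)=n-k+r$, and that $\Cc^\perp$ is again MRD, so $M_1(\Cc^\perp)=k+1$. Proposition \ref{dependanceBrtParamMRD} then yields $B^r_t(\Cc)=0$ for $t>k$, and
\[ B^r_t(\Cc)=\begin{bmatrix} n\\ t\end{bmatrix}_q\begin{bmatrix} k-t\\ r\end{bmatrix}_{q^m} \quad \text{for } t\le k. \]
Substituting, every term with $t>k$ vanishes, so the upper limit of summation drops from $n$ to $k$. The surviving terms are exactly the terms displayed in the statement.

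Second, I handle the edge cases of the index range. If $n-w>k$, the sum is empty and the formula gives $A^r_w(\Cc)=0$. This agrees with the actual distribution: any nonzero $\Dc$ has $\text{wt}_R(\Dc)\ge M_1(\Cc)=n-k+1$, and the case $r=0$, $w=0$ is covered by Proposition \ref{PropDistribution}. Both sides also vanish consistently when $r>k$, because then $\begin{bmatrix} k-t\\ r\end{bmatrix}_{q^m}=0$. Finally, the right-hand side involves only $n$, $m$, $k$, $q$, $r$ and $w$, which gives the claimed independence from the particular code.

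The only delicate point is justifying Proposition \ref{AfonctionB} itself, since it is built on Lemma \ref{lemmaInversionGauss}. That lemma is stated for indices $j\ge 1$, and the reindexing $\mathbf a=(A^r_n,\dots,A^r_0)$ must be matched against $\begin{bmatrix} n-w\\ t\end{bmatrix}_q$ from Proposition \ref{LienB->A}. I take that proposition as given, noting only that the inversion is really a statement about inverse lower-triangular matrices and so holds for all indices including $0$. Beyond this, the proof is a direct substitution and carries no real obstacle.
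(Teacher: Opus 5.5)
Your argument is the same as the paper's: substitute the MRD values of $B^r_t(\Cc)$ from Proposition \ref{dependanceBrtParamMRD} into the inversion formula of Proposition \ref{AfonctionB}, so the terms with $t>k$ drop out. For $r\ge 1$ this is complete and correct. Your remark that Lemma \ref{lemmaInversionGauss} really holds for all indices $0\le j\le n$ is also right.

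The one thing that is wrong is your edge-case claim at $r=0$. Proposition \ref{dependanceBrtParamMRD} is false for $r=0$. Every $\Cc(U)$ has exactly one $0$-dimensional subspace, so $B^0_t(\Cc)=\begin{bmatrix} n\\ t\end{bmatrix}_q\neq 0$ for every $k<t\le n$. Its proof uses $M_r(\Cc)=n-k+r$, which has no meaning at $r=0$. Consequently the terms with $t>k$ do not vanish when $r=0$, and the displayed formula genuinely fails there whenever $k<n$:
\begin{itemize}
\item For $w=0$ the sum runs from $t=n$ to $k$, so it is empty and gives $0$. But $A^0_0(\Cc)=1$ by Proposition \ref{PropDistribution}, so that proposition contradicts the formula in this case rather than covering it.
\item For a $[2,1]$ MRD code and $w=1$, the formula gives $q+1$ instead of $A^0_1(\Cc)=0$.
\end{itemize}
So you, like the paper, have proved the theorem only for $r\ge 1$. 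You should state that restriction explicitly instead of asserting consistency at $r=0$.
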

\begin{proof}
    Combining Proposition \ref{AfonctionB} and Proposition \ref{dependanceBrtParamMRD}, we straightforwardly obtain the formula.
\end{proof}

\begin{example}\label{Ex5.3}
    \noindent Consider the $[4,2]$ Gabidulin code $\Cc_1 \subset \FF_{3^4}^4$, see \cite[Definition 2.3.]{HorlemannMRDGabidulin}. Its generator matrix is $G_1 \defegal \begin{pmatrix}
        1 & \alpha & \alpha^2 & \alpha^3 \\ 1 & \alpha^3 & \alpha^6 & \alpha^9
    \end{pmatrix}$, where $\alpha$ is a root of $x^4-x^3-1$ and $\FF_{3^4} = \FF_3(\alpha)$.  Since $\Cc_1$ is Gabidulin, it is MRD and thus the formula of Theorem \ref{ThmMRDindep} gives $A^1_1(\Cc_1) = A^1_2(\Cc_1) = 0$, $A^1_3(\Cc_1) = 40$ and $A^1_4(\Cc_1) = 42$.
    
    \noindent Now, consider the $[4,2]$ code $\Cc_2 \subset \FF_{3^4}^4$ with generator matrix $G_2 \defegal \begin{pmatrix}
        1 & 0 & \alpha & \alpha^2 \\ 0 & 1 & \alpha^2 & 2\alpha
    \end{pmatrix}$ where $\alpha$ is defined previously. This code is MRD but it is neither a Gabidulin code, nor a generalized Gabidulin code, see \cite[Example 5.7.]{HorlemannMRDGabidulin}. In this case, we also find that $A^1_3(\Cc) = 40$, $A^1_4(\Cc)=42$. 

    Thus, with this example, we see that two non-equivalent MRD codes with the same parameters have the same GRW distribution.  
\end{example}

\begin{example}\label{Ex5.4}
    We consider the $[4,3]$ Gabidulin code $\Cc \subset \FF_{3^4}^4$ with generator matrix $G = \begin{pmatrix}
        1 & \alpha & \alpha^2 & \alpha^3 \\ 1 & \alpha^3 & \alpha^6 & \alpha^9 \\ 1 & \alpha^9 & \alpha^{18} & \alpha^{27}
    \end{pmatrix}$, where $\alpha$ is defined by $\alpha^4-\alpha^3-1=0$. Since $\Cc$ is a Gabidulin code, we have that $M_1(\Cc) = 2, M_2(\Cc) =3$ and $M_4(\Cc)= 4$. Moreover, using Theorem \ref{ThmMRDindep}, we obtain $A^1_1(\Cc) = A^2_2(\Cc) = A^3_3(\Cc) = 0$ and $A^1_2(\Cc) = 130, A^1_3(\Cc)=2760, A^1_4(\Cc)=3753,A^2_3(\Cc)=40,A^2_4(\Cc)=6603$ and $A^3_4(\Cc)=1$. These results are consistent since the number of subspaces of $\Cc$ with dimension $1$ and $2$ is respectively $\begin{bmatrix}
        3\\1
    \end{bmatrix}_{3^4}=6643=\begin{bmatrix}
        3\\2
    \end{bmatrix}_{3^4}$.
\end{example}

\section{Acknowledgements}
The author would first like to thank Steven Dougherty for insightful discussions about MacWilliams identities, which were at the very beginnings of this paper. Also, he would like to warmly express his gratitude to Grégory Berhuy and Frédérique Oggier, his PhD supervisors, for their valuable support and their meticulous proofreading. He thanks Pascal Bacchus for his valuable advice on improving the SAGE codes. Finally, the author expresses his sincere thanks to the anonymous referee for his valuable corrections and suggestions.

\bibliographystyle{amsplain}

\end{document}